\newtheorem{problem}{Problem}
\newtheorem{lemma}{Lemma}
\newtheorem{remark}{Remark}
\newcommand{\be}{\begin{equation}}
\newcommand{\ee}{\end{equation}}
\newcommand{\ist}{\hspace*{.4mm}}
\newcommand{\rmv}{\hspace*{-.4mm}}
\begin{document}



\title{ {Arithmetic Average Density Fusion - Part IV:}
Distributed Heterogeneous Fusion of RFS and LRFS Filters via Variational Approximation
}

\author{Tiancheng~Li,~\IEEEmembership{Senior Member,~IEEE}, Haozhe~Liang, Guchong Li, Jes\'us Garc\'ia Herrero and Quan Pan %
\thanks{This work was partially supported 
 by National Natural Science Foundation of China (Grant No. 62071389 and 62201316), Natural Science Basic Research Program of Shaanxi (Program No. 2023JC-XJ-22) 
 and Fundamental Research Funds for the Central Universities. 
}
\thanks{T.\ Li, H. Liang, G. Li, and Q. Pan are all with the Key Laboratory of Information Fusion Technology (Ministry of Education), School of Automation, Northwestern Polytechnical University, Xi'an 710129, China, e-mail: \{t.c.li,guchong.li,quanpan\}@nwpu.edu.cn,haozheliang@mail.nwpu.edu.cn}
\thanks{J. G. Herrero is with the Group GIAA, University Carlos III of Madrid, Spain.
e-mail: jgherrer@inf.uc3m.es}
}

\maketitle

\begin{abstract}
This paper, {the fourth part of a series of papers on the arithmetic average (AA) density fusion approach and its application for target tracking,}
addresses the intricate challenge of {distributed} heterogeneous multisensor multitarget tracking, where each inter-connected sensor operates a probability hypothesis density (PHD) filter, a multiple Bernoulli (MB) filter or a labeled MB (LMB) filter 
and they cooperate with each other via information fusion. 
{Earlier papers in this series have}
proven that the proper AA fusion of these filters is all exactly built on averaging their respective unlabeled/labeled PHDs. Based on this finding, two PHD-AA fusion approaches are proposed via variational minimization of the upper bound of the Kullback-Leibler divergence between the local and multi-filter averaged PHDs subject to cardinality consensus based on the Gaussian mixture implementation, enabling heterogeneous filter cooperation. One focuses solely on fitting the weights of the local Gaussian components (L-GCs), while the other simultaneously fits all the parameters of the L-GCs at each sensor, both seeking average consensus on the unlabeled PHD, irrespective of the specific posterior form of the local filters. For the distributed peer-to-peer communication,
both the classic consensus and flooding paradigms have been investigated. 
Simulations have demonstrated the effectiveness and flexibility of the proposed approaches in both homogeneous and heterogeneous scenarios. 
\end{abstract}

\begin{IEEEkeywords}
Random finite set, arithmetic average fusion, distributed tracking, heterogeneous fusion, multitarget tracking
\end{IEEEkeywords}



\section{Introduction}
Since the pioneering work of the probability hypothesis density (PHD) filter \cite{Mahler03}, the random finite set (RFS) framework has garnered significant attention in the last two decades for multitarget tracking in clutter, which is particularly competitive in scenarios where the number of targets and their states are unknown and time-varying \cite{Mahler14book}.
Along with the maturity of advanced wireless sensing and communication technologies, the distributed multisensor RFS filtering has gained considerable attention \cite{Uney13,Battistelli13,Uney19consistency,SLi17LableInconsistence,Li19ParallelCC,Li20AAmb,Da_Li20TSIPN,Yi20AAfov,Gao20cphd,GLi22,GLi2He}, where netted sensors communicate and cooperate with each other in a peer-to-peer (P2P) fashion. 
In the classic setup for multi-sensor target tracking, the local filters are homogeneous, meaning that all local sensors run the same filtering algorithm; see the review \cite{Li22chapter}. 
However, the abundance of sensor networking technologies has led to the dominance of heterogeneous network \cite{Qiu18HeteroIoT,Yan20Heterogeneous,Yi21Heterogeneous} where the local sensors have unequal computing, memory capacities and isomeric measurements, and thus are suitable for running heterogeneous RFS filters, whether unlabeled or labeled RFS filters \cite{Vo15mtt}. 
The combination of heterogeneous filters, which can adopt diverse target and scenario models, inherently offers greater robustness and reliability compared to a uniform filter.
In fact, even within a homogeneous sensor network, employing different filtering/tracking algorithms can enhance robustness and reliability \cite{Yarvis05hetero, Liggins17,Li23Heterogeneous}. Nevertheless, the key challenge 
lies in the inability to directly fuse these heterogeneous multitarget densities while preserving their respective posterior form, not to mention coping with the complicated and unknown correlation between the inner-connected filters and be computing fast in the distributed case.


To be more specific, the heterogeneous RFS fusion we seek in this paper involves the classic PHD filter \cite{Mahler03,Vo05,Vo06}, the multiple Bernoulli (MB) filter \cite{Vo09CBmember}, the labeled MB (LMB) filter \cite{Vo13Label,Reuter14LMB}, 
all based on the popular Gaussian mixture (GM) implementation.
To address the heterogeneous fusion problem, our earlier consensus-driven fusion work \cite{Li23Heterogeneous} revised only the weights of the local Gaussian components (L-GCs) so that the corresponding (unlabeled) PHDs reach consensus approximately with each other, namely PHD-level arithmetic-average (AA) fusion \cite{Li22RFS-AA-Derivation}. That is, the consensus is reached on the first moment of the posteriors of these fusion filters, rather than on the posteriors themselves.
The fusion framework can be illustrated in Fig. \ref{fig:HFframework}, which provides a flexible, computationally efficient way for heterogeneous RFS filter cooperation. The statistics and properties of the AA density fusion including its insensitivity to arbitrary inter-sensor correlation and robustness to local misdetection have been studied in \cite{Bailey12,Li19Second,Li24SomeResults} in comparison with the geometric average density fusion approach {and more comprehensively} been analyzed in \cite{Koliander22}. Extension to fusion of {distributions over unknown quantities of interest} and of soft-decisions has been made in \cite{Kayaalp22} and \cite{Kayaalp23}, respectively.

\begin{figure*}
  \centering
  \includegraphics[width=17.5cm]{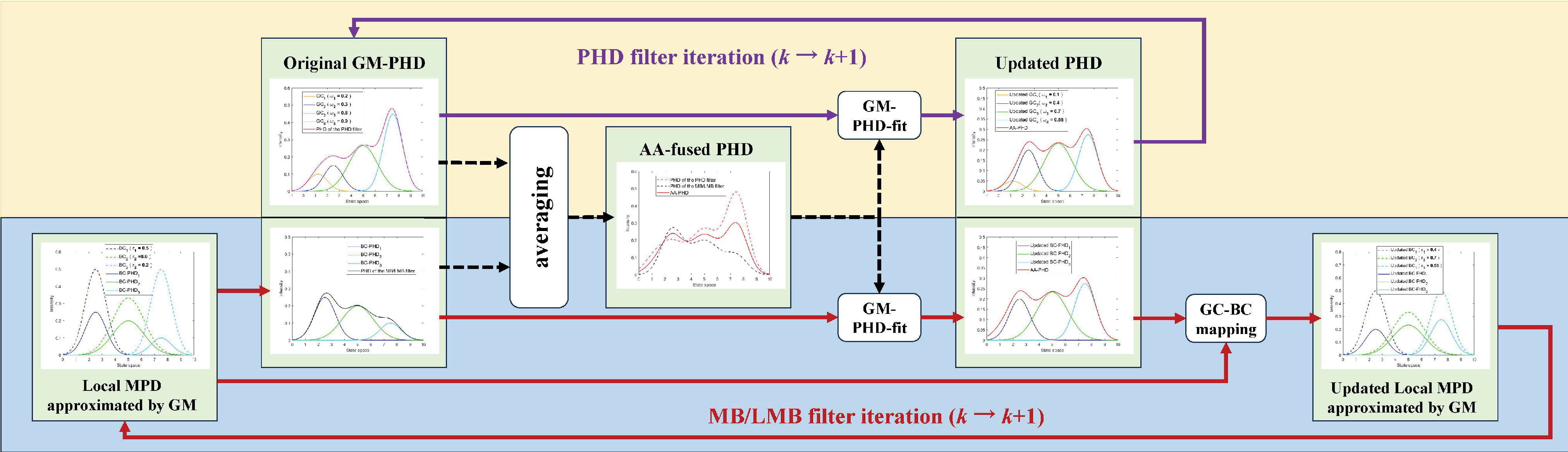}\\  
  \caption{Heterogeneous unlabeled and labeled RFS filter cooperation based on GM-PHD fit.} \label{fig:HFframework}
  \vspace{-2mm}
\end{figure*}

The heterogeneous GM-PHD-AA-fit solver \cite{Li23Heterogeneous}, however,
is a coordinate descent method (CDM) which is prone to non-convergence and computationally expensive, 
demonstrating significant limitations for large-scale networks.
To overcome these drawbacks, this paper employs the variational approximation (VA) approach to fit the weights and, additionally, the mean and covariance of the L-GCs for a more thorough GM-PHD-AA fusion, which results in two relevant fusion approaches of different algorithm complexities. In addition, we consider the distributed P2P sensor network, in which both the prevalent average consensus \cite{Olfati-Saber07,Sayed14book} and distributed flooding approaches \cite{Li17flooding} can be individually employed for P2P information communication. 



%


{This paper serves the fourth part of a series of papers that aim to provide a comprehensive and thorough study of the AA fusion methodology and its application for target tracking. The earlier three parts are available in \cite{Li24SomeResults,Li22RFS-AA-Derivation,Li23Heterogeneous}.}

The paper is organized as follows. 
The background is given in Section \ref{sec:background}. The proposed heterogeneous fusion approaches are detailed in Section \ref{sec:GMfit}. Alternative approximation methods are discussed in Section \ref{sec:bounds}.
Simulation is given in Section \ref{sec:simulation} before the paper is concluded in Section \ref{sec:conclusion}.

\section{Background and Preliminaries} \label{sec:background}
%

\subsection{Scenario Models and Assumptions}
The states of a random number of targets at time $k\in \mathbb{N}$ are described by an RFS $\Xi$, namely random target state set, with the form ${X}_k = \big\lbrace \mathbf{x}_{1}, \dots, \mathbf{x}_{n} \big\rbrace \in \mathbb{X}$, where $\mathbf{x}_{i} \in \mathbb{R}^x$ is the $i$-th target's state, $n =|{X}_k|$ denotes the random number of targets and $\mathbb{X}$ is the hyperspace of all finite subsets of the state space $\mathcal{X}$.
For any realization of $X_k$ with a given cardinality $n$, namely ${X}_k^n = \big\lbrace \mathbf{x}_{1}, \dots, \mathbf{x}_{n} \big\rbrace$, \cite[Eq.2.36]{Mahler14book},
the multitarget probability density (MPD) is defined as
$f({X}^n_k) = n! \rho(n) f(\mathbf{x}_{1}, \dots, \mathbf{x}_{n})$
where $f(\mathbf{x}_{1}, \dots, \mathbf{x}_{n} )$ are the localization densities and the cardinality distribution is given by $\rho(n)\triangleq \mathrm{Pr}\{|{{X}_k}|=n\} = \int_{|{X}_k| = n} {f({X}_k)\delta {X}_k}$. To distinguish different targets/tracks, the RFS is extended to the labeled RFS (LRFS) whose elements are assigned with distinct labels \cite{Vo13Label,Vo14GLMB}. 
Denote by $\mathbb{L}$ the hyperspace
of all finite subsets of the label space $\mathcal{L}$.
A realization of an LRFS $\widetilde{\Xi}$ with cardinality $n$, multitarget state ${X}^n_k$ and label set ${L}^n_k = \big\lbrace \ell_{1}, \ell_{2}, \dots, \ell_{n} \big\rbrace \in \mathbb{L}$ at time $k$ is denoted by $\widetilde{X}^n_k = \{ (\mathbf{x}_1, \ell_1),(\mathbf{x}_2, \ell_2),...,(\mathbf{x}_n, \ell_n) \} \in \mathbb{X} \times \mathbb{L}$.

The following scenario assumptions are made.

\subsubsection{Target motion}
The multitarget motion model is assumed as a union of several independent processes
$X_{k+1|k}=T_{k+1|k}(\mathbf{x}_1)\cup \dots \cup T_{k+1|k}(\mathbf{x}_n)\cup B_{k+1|k}$,
where the Bernoulli RFS $T_{k+1|k}(\mathbf{x}_i)$ is the set of targets at time $k+1$ generated by target $\mathbf{x}_i$ at time $k$, which is modeled by a survival probability $p^{\text{s}}_k(\mathbf{x}_i)$ and a Markov transition probability density function (PDF) $f_{k+1|k} (\cdot|\cdot)$, $B_{k+1|k}$ is the target birth RFS which might be modelled differently at local sensors based on either Poisson or MB RFS. 

\subsubsection{Sensor measurement} We consider an undirected P2P sensor network with $S$ nodes that are synchronous, perfectly coordinated and share the same region of interest (ROI).
For any $s \in \mathcal{S} =\{1,2,...,S\}$ with $\mathcal{S}$ denoting the sensor set, denote {$\mathcal{S}_s$} 
as the set of neighbor nodes to node $s$ including itself. 
The measurements of sensor $s$ at time $k$ are described by an RFS $\Sigma$ with the form ${Z}_{s,k} = \big\lbrace \mathbf{z}_{s,k,1}, \dots, \mathbf{z}_{s,k,m} \big\rbrace \in \mathbb{Z}$, where $\mathbf{z}_{s,k,m'} \in \mathbb{R}^z$ is the $m'$-th measurement generated by a target and collected by sensor $s$ at time $k$, $m$ and $\mathbb{Z}$ denote the number of measurements and  the hyperspace of all finite subsets of the measurement space $\mathcal{Z}$ respectively.
The measurement set can be decomposed by
$Z_{s,k}=\Upsilon_{1,k}(\mathbf{x}_1)\cup \dots \cup \Upsilon_{n,k}(\mathbf{x}_n)\cup C_{s,k}$, where the Bernoulli RFS $\Upsilon_{i,k}(\mathbf{x}_i)$ is the set of measurement generated by target $\mathbf{x}_i$ with detection probability $p_k^d(\mathbf{x}_i)$, 
and the clutter RFS $C_{s,k}$ follows a Poisson RFS and generates clutter measurement with rate $\lambda_{k}$. 



\subsection{First Moment of the MPD: PHD}

The RFS/LRFS is completely characterized by its MPD or labeled MPD; 
the detail of these unlabeled and labeled MPDs regarding the Poisson \cite{Mahler03,Vo05,Vo06}, MB \cite{Vo09CBmember} and LMB \cite{Vo13Label,Reuter14LMB} \cite{Vo13Label,Reuter14LMB} can be found in the literature and is omitted here. We hereafter focus on their (unlabeled) first-order moment density \cite[pp. 168-169]{Goodman97}, namely the PHD.
The PHD of the MPD $f_k(X_k)$ is an ordinary density
function on single target $\mathbf{x}\in \mathcal{X}$, defined as \cite[Ch.4.2.8]{Mahler14book} 
\begin{align}
	D_k(\mathbf{x})   
	& \triangleq \int_{\mathbb{X}}   {\bigg(\sum_{\mathbf{y}\in X_k}{\delta_\mathbf{y}}(\mathbf{x}) \bigg)f_k(X_k)\delta X_k}, \label{def-PHD}
\end{align}
where ${\delta_\mathbf{y}}(\mathbf{x})$ is the Dirac delta function concentrated at $\mathbf{y}$.

The (unlabeled) PHD of labeled MPD $f_k\big(\widetilde{X}_k\big)$ that is the joint distribution of the state and the label is given by \cite{Vo13Label}
\be
	D_k(\mathbf{x}) \triangleq \sum\limits_{l \in  \mathbb{L}} \int_{ \mathbb{X} \times \mathbb{L}}  {f_k}\big((\mathbf{x},l)\cup {\widetilde X_k}\big)\delta {\widetilde X_k},
\ee
where the integral of labeled RFS is defined in \cite{Vo13Label}.


\subsubsection{Poisson RFS} The PHD of the Poisson RFS $X_k$ with mean $\lambda_k$ and MPD $f_k^{\text {p}}(X_k) = {{\text{e}}^{ - {\lambda_k}}}\prod_{\mathbf{x} \in X} {{\lambda_k}{p}_k(\mathbf{x})}$ at time $k$ is given by
\begin{align}\label{eq:Poisson-PD}
	D_k^{\text{p}}(\mathbf{x}) =\,\, \lambda_k {p}_k(\mathbf{x}),
\end{align}
where $p_k(\mathbf{x})$ denotes the single-target probability density (SPD) at time $k$.

\subsubsection{Bernoulli RFS} The PHD of a Bernoulli RFS with target existence probability $r_k$ and SPD $p_k(\mathbf{x})$ is given by
\begin{align}
D_k^{\text{b}}\left(\mathbf{x}\right) & = r_kp_k\left(\mathbf{x}\right).
\label{eq:Bernoulli-PHD}
\end{align}

\subsubsection{MB RFS} An MB RFS $ X_k$ with $n$ independent Bernoulli RFSs \cite{Vo09CBmember} can represent maximum $n$ targets. 
Denoting the $l$-th Bernoulli component (BC) by $\big(r_k^{(l)},p_k^{(l)}(\mathbf{x})\big)$, the {PHD of MB RFS $ X_k$ with the MPD $f_k^\text{mb} (X_k) = \sum_{\uplus_{l=1}^{n}X^{(l)}=X}\prod_{l=1}^{n}f_k^\text{b} \left(X_k^{(l)}\right)$ is} given by
\begin{align}
	D_k^{\text {mb}}(\mathbf{x}) = \sum_{l=1}^n{r^{(l)}_k}{p^{(l)}_k}(\mathbf{x}),  \label{eq:PHD-MB}
\end{align}
where $\uplus$ denotes the disjoint union.

\subsubsection{LMB RFS}
The LMB can be viewed as a special MB with unique label assigned for each BC \cite{Reuter14LMB}. Given an MB RFS with parameter set $\big\{r_k^{(l)},p_k^{(l)}(\mathbf{x})\big\}_{l=1}^n$ at time $k$, we create a label set ${L}_k^n=\left\{\ell_1,...,\ell_n\right\}$ and $l$-th BC corresponds to a unique label $\ell_l\in {L}_k^n$. Hence, the LMB RFS can be characterized by parameter set $\big\{r_k^{(\ell_l)},p_k^{(\ell_l)}(\mathbf{x})\big\}_{\ell_l\in {L}_k^n}$.
The labeled PHD of track with label $\ell_l$ \cite{Vo13Label} is given by the weighted sum of densities for track with label $\ell_l$ over all hypotheses that contain the track with label $\ell_l$, namely
${\widetilde D}_k^{\text {lmb}}(\mathbf{x};\ell_l) =\mathbf{1}_{{L}_k^n}(\ell_l) \cdot r_k^{(\ell_l)} \cdot {\widetilde{p}}_k(\mathbf{x};\ell_l) \label{eq:LMB-Lphd}
$, where 
$\mathbf{1}_{L_k^n}(\ell_l)$ is the indicative function of the set $L_k^n$. 
The unlabeled PHD that is the sum of the
PHDs of all tracks \cite{Vo13Label} is given by
\begin{align}
	{D}_k^{\text {lmb}}(\mathbf{x}) &= \sum\limits_{\ell_l \in L^n_k } {\widetilde D}_k^{\text {lmb}}(\mathbf{x};\ell_l) = \sum\limits_{\ell_l \in L^n_k } r_k^{(l)} {p}_k(\mathbf{x};\ell_l). \label{eq:LMB-phd} 
\end{align}

\subsection{GM-PHD of Poisson, MB/LMB RFSs} \label{sec:Classic-RFS-distributions}
What follows presents the main result of the GM-PHD implementation of the PHD \cite{Vo06}, MB \cite{Vo09CBmember} or LMB \cite{Reuter14LMB} filters while their corresponding MPDs and PHDs are summarized in Tabel \ref{fig:RFS density}; detailed derivation can be found in the references.

\subsubsection{Poisson RFS}

The PHD filter is derived by assuming that the posterior follows a Poisson RFS but what is propagated over time is only the PHD rather than any MPD. 
The GM approximation of the PHD obtained by sensor $s \in \mathcal{S}$ at time $k$ that is completely determined by the parameter set $\mathcal{G}_{s,k} \triangleq \big\{ \big( \omega_{s,k}^{(j)} , \bm{\mu}_{s,k}^{(j)} , \bm{\Sigma}_{s,k}^{(j)}  \big) \big\}_{j=1}^{J_{s,k}}$, can be written as \cite{Vo06}:
\be\label{eq:GM_PHD}
	D_{s, k}(\mathbf{x}) \approx \sum_{j=1}^{J_{s,k}} \omega_{s,k}^{(j)} \mathcal{N}(\mathbf{x};\bm{\mu}_{s,k}^{(j)},\bm{\Sigma}_{s,k}^{(j)}), 
\ee
where 
$J_{s,k}$ is the number of GCs, $\omega_{s,k}^{(j)}$, $\bm{\mu}_{s,k}^{(j)}$ and $\bm{\Sigma}_{s,k}^{(j)}$ are the weight, mean and covariance of the $j$-th GC respectively. Correspondingly, the expected number of targets at sensor $s$ at time $k$ can be approximated by
\be
    \hat{N}_{s,k} \approx \sum_{j=1}^{J_{s,k}} \omega_{s,k}^{(j)}. \label{eq:N-est-sum_W}
\ee

\subsubsection{MB/LMB}
	
Denoting the  $\ell$-th BC of either the MB or the LMB at sensor $s$ by $\big(r_{s,k}^{(\ell)},p_{s,k}^{(\ell)}(\mathbf{x})\big), \ell \in {L}_{s,k}$, the PHD as given in \eqref{eq:PHD-MB} and \eqref{eq:LMB-phd} can be jointly rewritten as
	\begin{align}
		D_{s,k}^{\text {mb}}(\mathbf{x})  = \sum_{\ell \in {L}_{s,k}}{r_{s,k}^{(\ell)}}{p_{s,k}^{(\ell)}}(\mathbf{x}).  \label{eq:PHD-MB-ik}
	\end{align}

In the GM implementation, each BC $\ell$ is represented by $J_{s,k}^{(\ell)}$ GCs weighted by $\omega_{s,k}^{(\ell,\iota)} \!\ge\rmv 0$, $\iota=1,\dots,J_{s,k}^{(\ell)}$, i.e.,
$p_{s,k}^{(\ell)}(\mathbf{x}) = \sum _{\iota=1}^{J_{s,k}^{(\ell)}} \omega_{s,k}^{(\ell,\iota)} \ist \mathcal{N}\big(\mathbf{x};\bm{\mu}_{s,k}^{(\ell,\iota)}\rmv,\bm{\Sigma}_{s,k}^{(\ell,\iota)}\big) $ 
where
$\sum _{\iota=1}^{J_{s,k}^{(\ell)}} \omega_{s,k}^{(\ell,\iota)} = 1$.
This leads to the GM expression of the PHD \eqref{eq:PHD-MB-ik} as follows
\be\label{eq:mbPHD_GM}
	{D_{s,k}}(\mathbf{x}) \approx \sum_{\ell \in {L}_{s,k}} r_{s,k}^{(\ell)} \sum _{\iota=1}^{J_{s,k}^{(\ell)}} \omega_{s,k}^{(\ell,\iota)} \mathcal{N}\big(\mathbf{x};\bm{\mu}_{s,k}^{(\ell,\iota)}\rmv,\bm{\Sigma}_{s,k}^{(\ell,\iota)}\big),
\ee
which may be expressed as a unified GM determined by the parameter set $\mathcal{G}_{s,k} \triangleq \big\{ \mathcal{G}^{(\ell)}_{s,k} \big\}_{\ell \in {L}_{s,k}}$ as in \eqref{eq:GM_PHD}
with the GM size $J_{s,k} = \sum_{\ell \in {L}_{s,k}}  J_{s,k}^{(\ell)}$ and the weight of each GC reordered $j=1,\dots,J_{s,k}$ given as
\be\label{eq:mbPHD_WeightGC}
	\omega_{s,k}^{(j)} = r_{s,k}^{(\ell)}  \omega_{s,k}^{(\ell,\iota)},
\ee
which uses a unique GC index mapping 
\be\label{eq:indexMapping}
	j \leftrightarrow (\ell,\iota),
\ee
where $j=1,...,J_{s,k}, \ell  \in {L}_{s,k}, \iota=1,...,J^{(\ell)}_{s,k}$.


\begin{table*}[!htbp]
    \caption{Statistical characteristic of Poisson, MB and LMB}
    \centering

    \begin{tabular}{cccc}
    \toprule
         RFS&MPD&PHD&GM-PHD \\
    \midrule
         Poisson&$e^{-\lambda_k}\prod \limits_{\mathbf{x} \in X_k} \lambda_k {p}_k(\mathbf{x})$&$\lambda_k {p}_k(\mathbf{x})$&$\sum_{j=1}^{J_{s,k}} \omega_{s,k}^{(j)} \mathcal{N}(\mathbf{x};\bm{\mu}_{s,k}^{(j)},\bm{\Sigma}_{s,k}^{(j)})$\\
         MB&$\prod \limits_{\ell=1}^n (1-r_k^{(\ell)}) \sum \limits_{1 \le \ell_{1} \ne \dots \ne \ell_{n} \le n} \frac{{r^{(\ell_1)}_k}{p^{(\ell_1)}_k}(\mathbf{x})}{1-r_k^{(\ell_1)}} \dots \frac{{r^{(\ell_n)}_k}{p^{(\ell_n)}_k}(\mathbf{x})}{1-r_k^{(\ell_n)}}$&$\sum_{\ell=1}^n{r^{(\ell)}_k}{p^{(\ell)}_k}(\mathbf{x})$&$\sum_{\ell \in {L}_{s,k}} r_{s,k}^{(\ell)} \sum _{\iota=1}^{J_{s,k}^{(\ell)}} \omega_{s,k}^{(\ell,\iota)} \mathcal{N}\big(\mathbf{x};\bm{\mu}_{s,k}^{(\ell,\iota)}\rmv,\bm{\Sigma}_{s,k}^{(\ell,\iota)}\big)$\\
         LMB&$\delta_{|{\widetilde X}_k|}(|L^n_k|)\omega(L^n_k) \prod \limits_{(\mathbf{x},l) \in {\widetilde X}_k} {\widetilde p}_k(\mathbf{x},l)$&$\sum\limits_{l \in L^n_k } r_k^{(l)} {p}_k(\mathbf{x};l)$&$\sum_{\ell \in {L}_{s,k}} r_{s,k}^{(\ell)} \sum _{\iota=1}^{J_{s,k}^{(\ell)}} \omega_{s,k}^{(\ell,\iota)} \mathcal{N}\big(\mathbf{x};\bm{\mu}_{s,k}^{(\ell,\iota)}\rmv,\bm{\Sigma}_{s,k}^{(\ell,\iota)}\big)$\\
    \bottomrule

    \end{tabular}
    \label{fig:RFS density}

\end{table*}

\subsection{PHD-AA Fusion}
We consider the set of sensors ${\mathcal{S}'_s} \subseteq \mathcal{S}$ which share information with sensor $s$ by P2P
communication including sensor $s$ itself,
the AA fusion of their PHDs $\{D_r(\mathbf{x})\}_{r \in \mathcal{S}'_s}$ is given as follows
\be\label{eq:PHD-AA}
{D^{\text{AA}}_{\mathcal{S}'_s}}(\mathbf{x}) \triangleq \sum\limits_{r \in {\mathcal{S}'_s}} {{w_r}{D_r}(\mathbf{x})},
\ee
where the fusion weights $\sum\limits_{r \in {\mathcal{S}'_s}}{w_r}= 1, w_r > 0, \forall r \in \mathcal{S}$.

The PHD is a key statistical character of any RFS distribution, whose integral in any region gives the estimated number of target in that region \cite{Mahler14book}. Therefore, the PHD-AA fusion implies averaging the locally estimated numbers of targets, namely the cardinality consensus (CC) \cite{Li19CC}, i.e.,
$\hat{N}^{\text{AA}}_{\mathcal{S}'_s} = \int_{\mathcal{X}} {D^{\text{AA}}_{\mathcal{S}'_s}}(\mathbf{x})d \mathbf{x} = \int_{\mathcal{X}} \sum\limits_{r\in {\mathcal{S}'_s}} {w_r D_r(\mathbf{x})}  d \mathbf{x} =  \sum\limits_{r \in {\mathcal{S}'_s}} {w_r \hat{N}_r}$,
where $\hat{N}^{\text{AA}}_{\mathcal{S}'_s}$ denotes the average estimated number of targets and $\hat{N}_{r}$ denotes the estimated number of targets obtained from the PHD of the filter at sensor $r \in {\mathcal{S}'_s}$.

The AA fusion is a Fr\'{e}chet mean in the sense of the integrated
squared difference (ISD) \cite{Li20AAmb}, i.e.,
\be
    {D^{\text{AA}}_{\mathcal{S}'_s}}(\mathbf{x})  = \operatorname*{arg\,min}_{g\in\mathcal{F}_{\mathcal{X}}} \sum\limits_{r \in {\mathcal{S}'_s}}{w_r\int_{\mathcal{X}} \big(D_r(\mathbf{x})-g(\mathbf{x})\big)^2 d\mathbf{x}} \ist,
\ee
where $\mathcal{F}_\mathcal{X} \triangleq \{f: \mathcal{X} \rightarrow \mathbb{R} \}$.

It is also known as the best fit of the mixture (BFoM) subject to the CC constraint \cite{Li22RFS-AA-Derivation}; the proof is given in Appendix \ref{sec:app-PHD-BFoM} 
\begin{align}
 {D^{\text{AA}}_{\mathcal{S}'_s}}(\mathbf{x}) &= \operatorname*{arg\,min}_{g\in\mathcal{F}_\mathcal{X}} \sum\limits_{r \in {\mathcal{S}'_s}} {w_r\text{KL}\big(D_r\|g\big)},\label{eq:RFS-AA-Whole-KLD} \\
  \textit{s.t.} \hspace{3mm} & \int_{\mathcal{X}}g(\mathbf{x})d\mathbf{x}= \int_{\mathcal{X}}{D^{\text{AA}}_{\mathcal{S}'_s}}({\mathbf{x}})d\mathbf{x}, \label{eq:CC-constraint-BFoM}
\end{align}
where the Kullback-Leibler (KL) divergence, {extended to the PHD domain hereafter,} is defined as
\be
	\text{KL}\big(f\|g \big) \triangleq \int_\mathcal{X} {f(\mathbf{x})\log \frac{f(\mathbf{x})}{g(\mathbf{x})} d\mathbf{x}} \ist. \label{eq:KLD-def}
\ee

\section{Variational GM-PHD Fit} \label{sec:GMfit}
We here address averaging the GM-PHDs of heterogeneous RFS/LRFS filters, where the GM parameter set at time $k$ for sensor $s \in \mathcal{S}$ is denoted by $\mathcal{G}_{s,k}$ and the corresponding local PHD $D_{s,k}(\mathbf{x})$. 
The idea 
is to update the local GM parameters such that the corresponding PHD best fits the weighted PHD-AA $D^{\text{AA}}_{\mathcal{S}'_s,k}$ calculated by \eqref{eq:PHD-AA} (or approximately in the network average consensus manner), regardless the isomerism of the local filters.
To be more specific, {the approach} does not create or disregard any L-GCs in the local filters 
but adjust/optimize their weights {merely as what has been done in \cite{Li23Heterogeneous}} or jointly with the GC mean and covariance parameters, {both} via VA. The detail for obtaining $D^{\text{AA}}_{\mathcal{S}'_s,k}$ {through} P2P communication and consensus calculation 
in a distributed manner will be addressed in Section \ref{sec:dist_Implementation}, where $\mathcal{S}'_s$ {denotes the sensor set involved in the fusion which} depends on the number of iterations of P2P communication. 


Formally speaking, in the GM-PHD fit block of the proposed heterogeneous RFS/LRFS filter fusion framework as shown in Fig. \ref{fig:HFframework}, the individual GM parameters $\mathcal{G}_{s,k}=\{\omega_{s,k}^j,\bm{\mu}_{s,k}^j,\bm{\Sigma}_{s,k}^j\}_{j=1}^{J_{s,k}}$ (or merely the weights $\{\omega_{s,k}^j\}_{j=1}^{J_{s,k}}$) in each local filter are updated in \eqref{eq:GM-PHD-fit} while maintaining the CC as in \eqref{eq:CC-constraint}, the non-negativity of the weights as in \eqref{eq:non-negativity-w}, and the symmetry \eqref{eq:P_symmetry} and positive definiteness \eqref{eq:_non-negative} of the covariance,
\begin{align}
\mathcal{G}_{s,k}^{\text{BFoM}} & = \operatorname*{arg\,min}_{\mathcal{G}_{s,k}}  \mathbb{D}\big(D^{\text{AA}}_{\mathcal{S}'_s,k} \| D_{s,k} \big) \ist, \label{eq:GM-PHD-fit}\\
\textit{s.t.} \hspace{5mm} & \sum_{j=1}^{J_{s,k}} \omega_{s,k}^{(j)} =  \hat{N}^{\text{AA}}_{\mathcal{S}'_s,k} \ist, \forall s \label{eq:CC-constraint}\\
& \omega_{s,k}^{(j)} \geq 0, \forall s, j \ist, \label{eq:non-negativity-w}\\
&(\bm{\Sigma}_{s,k}^{(j)})^{\top}  =\bm{\Sigma}_{s,k}^{(j)} \ist, \forall s, j \label{eq:P_symmetry}\\
& \mathbf{x}^{\top} \bm{\Sigma}_{s,k}^{(j)} \mathbf{x} > 0, \forall s, j, \mathbf{x}\in \mathbb{R}^x \ist. \label{eq:_non-negative}
\end{align}
Here, $\mathbb{D}(f||g)$ is a discrepancy measure between $f$ and $g$ for which the focus of this paper is the KL divergence and its approximations/bounds. 

Our previous work \cite{Li23Heterogeneous} used the ISD metric as the optimization function \eqref{eq:GM-PHD-fit}, which allows for an analytical expression for GMs. It, however, is a non-convex function and the solver based on the CDM is susceptible to non-convergence or even divergence. In contrast, we redefine the metric in \eqref{eq:GM-PHD-fit} as the KL divergence in this work to better comply with the BFoM property of the AA fusion {as shown in \eqref{eq:RFS-AA-Whole-KLD}}. 
Since the GM-KL-divergence does not have analytical solution \cite{Hershey07KldGMs}, we resort to minimizing its variational upper bound (VUB).
%
The VA refers to approximating an arbitrary probability distribution under a specific
probabilistic graphical model \cite{wainwright2008graphical}. It
has been earlier used to reduce the size of the MB mixture \cite{williams2014efficient,xia2021poisson} and to reduce the data association in multiple extended target tracking \cite{granstrom2019poisson}, both demonstrating remarkable performance. Indeed, what follows shows that it yields much better approximation accuracy and computational  efficiency than the CDM approach.


%

\subsection{Minimizing VUB of GM-KL Divergence} \label{sec:ise}

Without loss of generality, we denote the desired PHD-AA $D^{\text{AA}}_{\mathcal{S}'_s,k}$ as $D_{\mathcal{S}'_s}$
and the original local $D_{s,k}$ as $D_s$ in short, i.e.,
\begin{align} \label{spnote}
	D_{\mathcal{S}'_s}(\mathbf{x}) &\triangleq \sum_{a} \pi_a \mathcal{N}_a(\mathbf{x}),  \\\label{spnote2}
    D_s(\mathbf{x}) &\triangleq \sum_{b} \omega_{s,b} \mathcal{N}_{s,b}(\mathbf{x}),
\end{align}
where $\pi_a$, $\omega_{s,b}$ denote the weight of the $a$-th and $b$-th GC in $D^{\text{AA}}_{\mathcal{S}'_s,k}$ and $D_{s,k}$, respectively.

We first introduce variational parameters $\phi_{s,b|a}\ge 0$ and $\varphi_{s,a|b}\ge 0$ satisfying the constraints $\sum_{b} \phi_{s,b|a}= \pi_a$ and $\sum_{a} \varphi_{s,a|b}= \omega_{s,b}$. For variational inference, \eqref{spnote} and \eqref{spnote2} can be, respectively, rewritten as
\begin{align}
	D_{\mathcal{S}'_s}(\mathbf{x}) &= \sum_{a,b} \phi_{s,b|a} \mathcal{N}_a(\mathbf{x}), \\
	D_s(\mathbf{x}) &= \sum_{a,b} \varphi_{s,a|b} \mathcal{N}_{s,b}(\mathbf{x}).
\end{align}

To satisfy the constraints \eqref{eq:CC-constraint} and \eqref{eq:non-negativity-w}, the variational parameters $\phi_{s,b|a}$ and $\varphi_{s,a|b}$ are subject to the following constraints:
\be
	\begin{cases}\label{eq:VA_constraints}
		\phi_{s,b|a}\ge 0,\  \forall (a,b), \\
		\varphi_{s,a|b}\ge 0,\  \forall (a,b), \\
		\sum_{b} \phi_{s,b|a}= \pi_a,\  \forall a,\\
		\sum_{a} \varphi_{s,a|b}= \omega_{s,b},\  \forall b,\\
		\sum_{a,b}\phi_{s,b|a}=\sum_{a,b}\varphi_{s,a|b},
	\end{cases}
\ee
where the last constraint is from \eqref{eq:CC-constraint}.

Using the Jensen's inequality, the upper bound of the objective function given in \eqref{eq:GM-PHD-fit} using the KL divergence and variational parameters and subject to the CC constraint \eqref{eq:VA_constraints} can be given as follows \cite[Sec.8]{Hershey07KldGMs}
\begin{align}\nonumber
	&\mathrm{KL}(D_{\mathcal{S}'_s}||D_s)\\
&= \int D_{\mathcal{S}'_s}(\mathbf{x}) \  \text{log}\  \frac{D_{\mathcal{S}'_s}(\mathbf{x})}{D_s(\mathbf{x})}\  d\mathbf{x}\nonumber\\
	&= - \int D_{\mathcal{S}'_s}(\mathbf{x}) \ \text{log}\  \bigg(\sum_{a,b} \frac{\varphi_{s,a|b}\mathcal{N}_{s,b}(\mathbf{x})}{\phi_{s,b|a}\mathcal{N}_a(\mathbf{x})} \frac{\phi_{s,b|a}\mathcal{N}_a(\mathbf{x})}{D_{\mathcal{S}'_s}(\mathbf{x})} \bigg)\ d\mathbf{x} \nonumber\\
	&\le -\sum_{a,b} \phi_{s,b|a} \int \mathcal{N}_a(\mathbf{x}) \ \text{log}\ \Big( \frac{\varphi_{s,a|b}\mathcal{N}_{s,b}(\mathbf{x})}{\phi_{s,b|a}\mathcal{N}_a(\mathbf{x})}\Big) \ d\mathbf{x} \nonumber\\
	&=
\mathrm{KL}(\phi||\varphi) + \sum_{a,b} \phi_{s,b|a}\  \mathrm{KL}(\mathcal{N}_a||\mathcal{N}_{s,b}), \label{eq_upbound}
\end{align}
where the KL divergence of $\mathcal{N}_{s,b}(\mathbf{x}):=\mathcal{N}(\mathbf{x};\bm{\mu}_{s,b},\bm{\Sigma}_{s,b})$ relative to $\mathcal{N}_a(\mathbf{x}):=\mathcal{N}(\mathbf{x};\bm{\mu}_a,\bm{\Sigma}_a)$ is derived from \eqref{eq:KLD-def} as
\begin{align} \label{eq:KLD-gaussian}
	\mathrm{KL}(\mathcal{N}_a||\mathcal{N}_{s,b})
	=&\frac{1}{2}\big[ (\bm{\mu}_a-\bm{\mu}_{s,b})^\text{T} \bm{\Sigma}_{s,b}^{-1} (\bm{\mu}_a-\bm{\mu}_{s,b}) \nonumber \\  & + \text{tr}(\bm{\Sigma}_{s,b}^{-1}\bm{\Sigma}_a)
	+\text{log} \frac{\text{det}(\bm{\Sigma}_{s,b})}{\text{det}(\bm{\Sigma}_a)}-n_x \big],
\end{align}
where $n_x$ is the dimension of $\mathbf{x}$.



Based on the VUB \eqref{eq_upbound}, the GM-PHD-AA fusion \eqref{eq:GM-PHD-fit} using the KL divergence is now given by determining the variational parameters $\{\phi_{s,b|a}, \varphi_{s,a|b}\}_{a,b}$ and L-GC parameters $\{\bm{\mu}_{s,b},\bm{\Sigma}_{s,b}\}_{b}$ that minimize the upper bound of the GM-KL-divergence \eqref{eq_upbound} at each filter $s\in \mathcal{S}$, i.e., 
\vspace{0.2cm}
\begin{problem}
\begin{align}
	\min \limits_{\{\phi_{s,b|a}, \varphi_{s,a|b}\},\{\bm{\mu}_{s,b},\bm{\Sigma}_{s,b}\}} &\mathrm{KL}(\phi||\varphi) + \sum_{a,b} \phi_{s,b|a}\  \mathrm{KL}(\mathcal{N}_a||\mathcal{N}_{s,b})
\end{align}
subject to \eqref{eq:VA_constraints}. 
\end{problem}

\begin{lemma}\label{varlemma}
Problem 1 can be solved by alternately updating the variational parameters $\left\{\phi_{s,b|a},\varphi_{s,a|b}\right\}_{a,b}$ 
and L-GC parameters $\{\bm{\mu}_{s,b},\bm{\Sigma}_{s,b}\}_b$ with guaranteed convergence. 
	\begin{proof}
		See Appendix \ref{sec:VAconvergence}.
	\end{proof}
\end{lemma}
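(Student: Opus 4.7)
The plan is to treat Problem 1 as a two-block minimization and appeal to a standard monotone-convergence argument for block coordinate descent. First I would fix the L-GC parameters $\{\bm{\mu}_{s,b},\bm{\Sigma}_{s,b}\}_b$ and minimize the VUB over the variational parameters $(\phi_{s,b|a},\varphi_{s,a|b})$. The term $\mathrm{KL}(\phi\|\varphi)$ is jointly convex in its arguments and the second term $\sum_{a,b}\phi_{s,b|a}\mathrm{KL}(\mathcal{N}_a\|\mathcal{N}_{s,b})$ is linear in $\phi$, so this subproblem is convex. Introducing Lagrange multipliers for the equality constraints $\sum_b\phi_{s,b|a}=\pi_a$, $\sum_a\varphi_{s,a|b}=\omega_{s,b}$ and the CC coupling $\sum_{a,b}\phi_{s,b|a}=\sum_{a,b}\varphi_{s,a|b}$, the KKT conditions produce closed-form multiplicative updates for $\phi$ and $\varphi$ in the spirit of the Hershey--Olsen variational KL approximation.

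Next I would freeze $(\phi,\varphi)$ and minimize over $\{\bm{\mu}_{s,b},\bm{\Sigma}_{s,b}\}_b$. With $\phi$ fixed the first term is constant, so only $\sum_{a,b}\phi_{s,b|a}\mathrm{KL}(\mathcal{N}_a\|\mathcal{N}_{s,b})$ remains. Plugging in \eqref{eq:KLD-gaussian} and setting the gradients with respect to $\bm{\mu}_{s,b}$ and $\bm{\Sigma}_{s,b}^{-1}$ to zero gives the usual moment-matching solution: $\bm{\mu}_{s,b}$ becomes a $\phi_{s,b|a}$-weighted average of the $\bm{\mu}_a$'s, and $\bm{\Sigma}_{s,b}$ becomes the corresponding weighted covariance plus outer-product spread. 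Both the symmetry constraint \eqref{eq:P_symmetry} and the positive-definiteness constraint \eqref{eq:_non-negative} are automatically satisfied by this closed form (the latter provided at least one $\phi_{s,b|a}$ is positive), so the inner subproblem is solved explicitly and globally.

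Convergence then follows by a standard argument. Each block update is a strict descent step, or at worst leaves the VUB unchanged, so the sequence of objective values is monotonically non-increasing. Since $\mathrm{KL}(\cdot\|\cdot)\ge 0$ the VUB is bounded below by zero, and the monotone convergence theorem gives convergence of the objective; the two-block Grippo--Sciandrone result (each subproblem has a unique minimizer on its feasible set) then yields convergence of the iterates to a stationary point of the VUB.

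The main obstacle, which I expect the appendix to address, is that the joint problem is non-convex in $(\phi,\varphi,\bm{\mu},\bm{\Sigma})$, so only stationarity, not global optimality, can be claimed; one also has to verify that the iterates stay in a compact set where the covariances remain bounded away from singularity (otherwise $\log\det(\bm{\Sigma}_{s,b})$ can diverge) and that the CC coupling in \eqref{eq:VA_constraints} does not cause the variational-parameter subproblem to lose strict convexity along certain directions. Handling these regularity issues, rather than writing down the update rules, is the delicate part of the proof.
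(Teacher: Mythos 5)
Your proposal is correct and follows essentially the same route as the paper's Appendix~B: a two-block alternating minimization in which each block update does not increase the variational upper bound, combined with the fact that the bound is a non-negative combination of KL divergences and hence bounded below, giving monotone convergence of the objective values. The paper's proof stops exactly there (convergence of the objective, not of the iterates), so the additional machinery you invoke --- KKT multiplicative updates for the $(\phi,\varphi)$ block (the paper instead uses the hard assignment of Lemma~2), the Grippo--Sciandrone stationarity result, and the compactness/non-singularity caveats --- goes beyond what the paper actually establishes or claims.
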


%

In the next two subsections, we will address the optimization of the variational parameters $\{\phi_{s,b|a}, \varphi_{s,a|b}\}_{a,b}$ and $\{\bm{\mu}_{s,b},\bm{\Sigma}_{s,b}\}_b$, respectively.

\subsection{Variational GC-weight-fit}\label{sec:wfit}
We first keep the L-GCs $\{\mathcal{N}_{s,b}\}$ unchanged and only optimize their weights via $\{\phi_{s,b|a}, \varphi_{s,a|b}\}$ in Problem 1, which then simplifies to
\vspace{0.2cm}
\begin{problem} \label{problem2}
	\begin{align}\label{eq:VAPfit}
		\mathop{\min}\limits_{\{\phi_{s,b|a}, \varphi_{s,a|b}\}} \ &\mathrm{KL}(\phi||\varphi) + \sum_{a,b} \phi_{s,b|a}\  \mathrm{KL}(\mathcal{N}_a||\mathcal{N}_{s,b})
	\end{align}
subject to \eqref{eq:VA_constraints}.
\end{problem}

\begin{lemma}\label{lemma_VPmin}
	Problem \ref{problem2} can be reformulated by setting $\phi_{s,b|a}=\varphi_{s,a|b}=h_{s,a,b}$ which simplifies to
	\begin{align}
		\min \limits_{h_{s,a,b}} \ &\sum_{a,b}h_{s,a,b}\  \mathrm{KL}(\mathcal{N}_a||\mathcal{N}_{s,b}) \label{eq:h_aj_opt}
	\end{align}
    subject to \eqref{eq:VA_constraints} and can be solved by
    \begin{align}\label{eq:h_aj_solvtion}
    	h_{s,a,b}=\begin{cases}
    		\pi_{a}, &b=\operatorname*{arg\,min}_{b} \mathrm{KL}(\mathcal{N}_a||\mathcal{N}_{s,b}), \\
    		0,&\text{otherwise},
    	\end{cases}
    \end{align}
    which finally leads to the new weights of L-GCs as follows:
\begin{align} \label{eq:VPw}
	\omega_{s,b}= \sum_{a} h_{s,a,b}, \ b=1,...,J_{s}.
\end{align}
\begin{proof}
Since $\varphi_{s,a|b}$ appears only in the first, non-negative term of the objective function of \eqref{eq:VAPfit}, 
Problem \ref{problem2} can be simplified by setting $\phi_{s,b|a}=\varphi_{s,a|b}=h_{s,a,b}, \forall s,a,b$ so that the first term is minimized to $0$. 
Then, \eqref{eq:h_aj_opt} can be solved directly by letting $h_{s,a,b}=\pi_{a}$ if $\mathcal{N}_{s,b}$ has smaller KL divergence with $\mathcal{N}_a$ than other L-GCs and $h_{s,a,b} = 0$ otherwise, namely \eqref{eq:h_aj_solvtion}.
\end{proof}
\end{lemma}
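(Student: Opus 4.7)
The plan is to exploit the additive structure of the VUB objective: it splits into a term $\mathrm{KL}(\phi\|\varphi)$ that couples the two variational parameters, and a linear-in-$\phi$ term weighted by the fixed Gaussian-to-Gaussian KL divergences. Since $\varphi$ appears only in the first term, I would minimize that term first, then reduce the problem to a linear program in $\phi$ alone.

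First I would observe that $\mathrm{KL}(\phi\|\varphi)\ge 0$ as a divergence between nonnegative joint ``allocation'' measures on index pairs $(a,b)$, with equality iff $\phi=\varphi$ pointwise. Next I would check that the ansatz $\varphi_{s,a|b}=\phi_{s,b|a}=:h_{s,a,b}$ is compatible with every constraint in \eqref{eq:VA_constraints}. The marginal constraint $\sum_b\phi_{s,b|a}=\pi_a$ becomes the constraint on $h$ in the $b$-direction. The marginal constraint $\sum_a\varphi_{s,a|b}=\omega_{s,b}$ is not really a constraint in Problem~\ref{problem2} at all: the updated weight $\omega_{s,b}$ is itself the output, defined by \eqref{eq:VPw} a posteriori, so setting $\omega_{s,b}=\sum_a h_{s,a,b}$ enforces it automatically. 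The CC constraint $\sum_{a,b}\phi=\sum_{a,b}\varphi$ holds trivially after the identification $\phi=\varphi$, and both sides equal $\sum_a\pi_a=\hat N^{\text{AA}}_{\mathcal{S}'_s}$, consistent with \eqref{eq:CC-constraint}. Thus the diagonal choice is feasible, drives the first term to zero, and leaves the second term unchanged, so it is globally optimal.

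With this reduction, Problem~\ref{problem2} collapses to \eqref{eq:h_aj_opt}, a linear objective in $h_{s,a,b}\ge 0$ subject to $\sum_b h_{s,a,b}=\pi_a$ for each $a$. This decouples over $a$: for each $a$ I minimize the linear functional $\sum_b h_{s,a,b}\,\mathrm{KL}(\mathcal{N}_a\|\mathcal{N}_{s,b})$ on the scaled simplex of mass $\pi_a$. By the standard LP fact that a linear function on a simplex attains its minimum at a vertex, all mass $\pi_a$ is placed on the index $b^\star(a):=\arg\min_b \mathrm{KL}(\mathcal{N}_a\|\mathcal{N}_{s,b})$, which is exactly \eqref{eq:h_aj_solvtion}. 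Marginalizing over $a$ then yields the updated weights \eqref{eq:VPw}.

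The main obstacle, which is actually mild, is justifying that restricting $\phi=\varphi$ does not discard a better interior solution; this follows from the strict positivity of KL off the diagonal combined with feasibility of the diagonal ansatz under the output-defined constraint on $\omega_{s,b}$. A secondary subtlety is handling non-uniqueness of $\arg\min_b$ in \eqref{eq:h_aj_solvtion}: any tie-breaking rule (or any convex combination of minimizing indices) gives the same optimal value, so the stated solution is one valid minimizer.
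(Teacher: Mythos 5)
Your proposal is correct and follows essentially the same route as the paper's proof: identify $\phi=\varphi=h$ to annihilate the nonnegative $\mathrm{KL}(\phi\|\varphi)$ term, then solve the remaining problem as a linear program over the scaled simplices $\sum_b h_{s,a,b}=\pi_a$, whose vertex solutions give \eqref{eq:h_aj_solvtion}. Your added remarks on feasibility of the diagonal ansatz (noting that $\omega_{s,b}$ is output-defined via \eqref{eq:VPw}) and on tie-breaking at the $\arg\min$ are sensible refinements of the same argument rather than a different approach.
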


\begin{remark}\label{remark_VPmin}
The optimal solver of Lemma \ref{lemma_VPmin} can be implemented by a clustering procedure applied on all the GCs of $D^{\text{AA}}_{\mathcal{S}'_s}$ at each local filter. That is, each L-GC $\mathcal{N}_{s,b}(\mathbf{x})$ obtained by the local filter plays a role of the cluster center and all the other L-GCs $\mathcal{N}_a(\mathbf{x})$ are grouped into each cluster according to their KL divergence to each L-GC. Meanwhile, those that have too large divergence with the L-GCs 
may be grouped with no L-GC/cluster. Finally, the weights of all GCs in each cluster $(s,b)$ will be summed up, resulting in the desired updated weight $\omega_{s,b}$ of the L-GC $(s,b)$. 
\end{remark}
\begin{algorithm}[t]
	\caption{variational GC-weight-fit}
	\begin{algorithmic}
		\setlength{\lineskip}{3pt}
		\setlength{\lineskiplimit}{3pt}
		\STATE \textbf{Input} $\left\{\bm{\mu}_{s,b},\bm{\Sigma}_{s,b}\right\}_{b=1}^{J_s}$, $\left\{\pi_{a},\bm{\mu}_{a},\bm{\Sigma}_{a}\right\}_{a=1}^{J_\text{AA}}$
        \STATE \textbf{Output} $\left\{\omega_{s,b}\right\}_{b=1}^{J_s}$
		\STATE \hspace{0.5cm}\textbf{initialize} $\left\{\omega_{s,b}\right\}_{b=1}^{J_{s}} =0$
		\STATE \hspace{0.5cm}\textbf{for} $a$ \textbf{=} 1 \textbf{to} $J_\text{AA}$ \textbf{do in parallel}
		\STATE \hspace{1.0cm}$n=\operatorname*{arg\,min}_{b} \mathrm{KL}(\mathcal{N}_a||\mathcal{N}_{s,b}), b=1,...,J_s$.
		\STATE \hspace{1.0cm}$\omega_{s,n} = \omega_{s,n}+\pi_{a}$
		\STATE \hspace{0.5cm}\textbf{end for}
	\end{algorithmic}
\end{algorithm}

The pseudo-code for this approach, referred to \textit{variational GC-weight-fit}, is summarized in Algorithm 1. At the end of the algorithm, the fused weights will be used to replace the weight of GCs in the PHD filter directly, or replaced that of the corresponding GC in the relevant BCs in the case of the MB/LMB filter according to the mapping \eqref{eq:indexMapping}. 


\subsection{Variational GM-PHD-fit}\label{sec:gmfit}
Obviously, a better fit can be expected if the parameters of L-GCs are also updated along with their weights. We therefore further optimize the means and variances of L-GCs $\left\{\mathcal{N}_{s,b}\right\}$ 
 in the local GM-PHD. 
That is, given $\{\phi_{s,b|a},\varphi_{s,a|b}\}$ 
yielded in the solver of Problem 2, Problem 1 simplifies to
\vspace{0.2cm}
\begin{problem} \label{problem3} 
	\begin{align}
		\min \limits_{\{\bm{\mu}_{s,b},\bm{\Sigma}_{s,b}\}} \ &  \sum_{a,b} h_{s,a,b}\ \mathrm{KL}(\mathcal{N}_a||\mathcal{N}_{s,b})  \label{eq:Lemm-gmfit} \\
  \text{s.t.} \hspace{3mm} &  (\bm{\Sigma}_{s,b})^{\top}  =\bm{\Sigma}_{s,b} \ist, \nonumber \\
  & \mathbf{x}^{\top} \bm{\Sigma}_{s,b} \mathbf{x} > 0, \hspace{3mm}  \forall b=1,..., J_s, \mathbf{x}\in \mathbb{R}^x , \nonumber
	\end{align}
where $h_{s,a,b}$ is given in \eqref{eq:h_aj_solvtion}. 
\end{problem}
\vspace{0.2cm}
\begin{lemma}\label{lemma:gm-g-fit}
	Problem \ref{problem3} can be solved by, $\forall b$,
\begin{align}
	&\bm{\mu}_{s,b}=\frac{1}{\omega_{s,b}} \sum_{a} h_{s,a,b} \bm{\mu}_a, \label{eq:GMmerging_m} \\
	&\bm{\Sigma}_{s,b}=\sum_{a} \frac{h_{s,a,b}}{\omega_{s,b}} \big[\bm{\Sigma}_a+(\bm{\mu}_a-\bm{\mu}_{s,b})^\text{T} (\bm{\mu}_a-\bm{\mu}_{s,b})\big].\label{eq:GMmerging_P}
\end{align}
\end{lemma}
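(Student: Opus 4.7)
The plan is to exploit the separability of the objective and then invoke the standard closed-form solution for the KL-based moment-matching problem for a single Gaussian against a weighted collection of Gaussians.

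First, I would observe that the cost $\sum_{a,b} h_{s,a,b}\,\mathrm{KL}(\mathcal{N}_a\|\mathcal{N}_{s,b})$ decouples across $b$, because $(\bm{\mu}_{s,b},\bm{\Sigma}_{s,b})$ and $(\bm{\mu}_{s,b'},\bm{\Sigma}_{s,b'})$ for $b\neq b'$ share no variables. So it suffices to solve, for each fixed $b$,
\begin{equation}
\min_{\bm{\mu}_{s,b},\bm{\Sigma}_{s,b}\succ 0}\; J_b := \sum_{a} h_{s,a,b}\,\mathrm{KL}(\mathcal{N}_a\|\mathcal{N}_{s,b}),
\end{equation}
with the closed-form Gaussian KL expression from \eqref{eq:KLD-gaussian} substituted in.

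Next I would compute the stationarity conditions. Differentiating $J_b$ with respect to $\bm{\mu}_{s,b}$ gives
\begin{equation}
\nabla_{\bm{\mu}_{s,b}} J_b = \bm{\Sigma}_{s,b}^{-1}\sum_a h_{s,a,b}(\bm{\mu}_{s,b}-\bm{\mu}_a)=\mathbf{0},
\end{equation}
and because $\bm{\Sigma}_{s,b}^{-1}$ is nonsingular this yields \eqref{eq:GMmerging_m} once I divide by $\omega_{s,b}=\sum_a h_{s,a,b}$ from \eqref{eq:VPw}. For the covariance, I would differentiate instead with respect to the precision $\bm\Lambda_{s,b}:=\bm{\Sigma}_{s,b}^{-1}$ using the standard identities $\partial\log\det(\bm\Lambda^{-1})/\partial\bm\Lambda = -\bm\Lambda^{-1}$, $\partial\,\mathrm{tr}(\bm\Lambda\bm\Sigma_a)/\partial\bm\Lambda=\bm\Sigma_a$, and $\partial (\bm v^\top\bm\Lambda\bm v)/\partial\bm\Lambda=\bm v\bm v^\top$. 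Setting $\nabla_{\bm\Lambda_{s,b}}J_b=\mathbf 0$ and cancelling $\tfrac12$ produces
\begin{equation}
\sum_a h_{s,a,b}\bigl[\bm\Sigma_a+(\bm\mu_a-\bm\mu_{s,b})(\bm\mu_a-\bm\mu_{s,b})^\top-\bm\Sigma_{s,b}\bigr]=\mathbf 0,
\end{equation}
which, after dividing by $\omega_{s,b}$, is exactly \eqref{eq:GMmerging_P} (and makes the outer-product notation in the lemma the natural reading of the scalar-looking expression written there).

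To upgrade these first-order conditions to a bona fide minimum and to verify the feasibility constraints, I would argue as follows. For the symmetry and positive-definiteness of $\bm\Sigma_{s,b}$: the right-hand side of \eqref{eq:GMmerging_P} is a nonnegative combination of symmetric positive semidefinite matrices $\bm\Sigma_a+(\bm\mu_a-\bm\mu_{s,b})(\bm\mu_a-\bm\mu_{s,b})^\top$, and as long as at least one $\bm\Sigma_a\succ 0$ with $h_{s,a,b}>0$ the resulting sum is positive definite, so the constraints $\bm\Sigma_{s,b}^\top=\bm\Sigma_{s,b}$ and $\bm\Sigma_{s,b}\succ 0$ are automatically met and need not be enforced as explicit Lagrange multipliers. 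For optimality: fixing $\bm\Sigma_{s,b}\succ 0$, the map $\bm\mu_{s,b}\mapsto J_b$ is a positive-definite quadratic, hence strictly convex, so \eqref{eq:GMmerging_m} is its unique minimiser; substituting back, the residual objective in $\bm\Lambda_{s,b}$ is of the form $c_1\,\mathrm{tr}(\bm\Lambda_{s,b}\bm S)-c_2\log\det(\bm\Lambda_{s,b})$ with $\bm S\succ 0$ and $c_1,c_2>0$, which is strictly convex on the positive-definite cone and attains its global minimum at the stationary point identified above. Thus the paired updates \eqref{eq:GMmerging_m}--\eqref{eq:GMmerging_P} solve Problem 3.

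The main obstacle I anticipate is bookkeeping rather than ideas: one has to be careful that the covariance equation is implicit (the right-hand side depends on $\bm\mu_{s,b}$), and that the joint problem over $(\bm\mu_{s,b},\bm\Sigma_{s,b})$ is not jointly convex even though it is marginally convex in each block. The clean way around this, which I would follow, is to solve the mean equation first (its solution is independent of $\bm\Sigma_{s,b}$), then plug the unique $\bm\mu_{s,b}^\star$ into the covariance equation, which then becomes a convex problem with a unique positive-definite solution; this ordering turns a non-jointly-convex problem into two convex sub-problems solved in sequence.
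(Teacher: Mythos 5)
Your proof is correct, but it follows a genuinely different route from the paper's. The paper proves the lemma by a Pythagorean-type decomposition of the objective, $\sum_{a,b} h_{s,a,b}\,\mathrm{KL}(\mathcal{N}_a\|\mathcal{N}_{s,b}) = \sum_b\bigl(\sum_a h_{s,a,b}\,\mathrm{KL}(\mathcal{N}_a\|f^*_{s,b}) + \mathrm{KL}(f^*_{s,b}\|\mathcal{N}_{s,b})\bigr)$, taking $f^*_{s,b}$ to be the weighted mixture $\sum_a h_{s,a,b}\mathcal{N}_a$ (the unconstrained minimizer by the BFoM property \eqref{eq:RFS-AA-Whole-KLD}), and then invoking Runnalls' Theorem 2 to conclude that the Gaussian minimizing $\mathrm{KL}(f_{s,b}\|\mathcal{N}(\bm{\mu},\bm{\Sigma}))$ is the moment-matched one given by \eqref{eq:GMmerging_m}--\eqref{eq:GMmerging_P}. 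You instead decouple over $b$ and derive the first-order conditions directly in $\bm{\mu}_{s,b}$ and in the precision $\bm{\Lambda}_{s,b}=\bm{\Sigma}_{s,b}^{-1}$, then upgrade stationarity to global optimality via block convexity (quadratic in the mean; $c_1\,\mathrm{tr}(\bm{\Lambda}\bm{S})-c_2\log\det\bm{\Lambda}$ in the precision). Your route is self-contained and more elementary --- it needs no external merging theorem and no appeal to the BFoM property --- and it has the added virtue of explicitly verifying the symmetry and positive-definiteness constraints of Problem \ref{problem3}, which the paper leaves implicit; it also correctly observes that \eqref{eq:GMmerging_P} must be read with the outer product $(\bm{\mu}_a-\bm{\mu}_{s,b})(\bm{\mu}_a-\bm{\mu}_{s,b})^{\top}$ rather than the scalar product as literally typeset. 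The paper's route is shorter given the cited result and ties the lemma conceptually to the KL/BFoM characterization of AA fusion used throughout; your ordering remark (solve the mean first, since its optimizer is independent of the covariance, then the covariance subproblem becomes convex) is a useful clarification that the paper does not make.
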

\begin{proof}
According to \eqref{eq:RFS-AA-Whole-KLD}, it is straightforward to have
\begin{align}
		f_{s,b}(\mathbf{x}) & \triangleq \sum_{a} h_{s,a,b} \mathcal{N}_a(\mathbf{x}) \\
 & = \min \limits_{g\in\mathcal{F}_\mathcal{X}} \sum_{a} h_{s,a,b}\  \mathrm{KL}(\mathcal{N}_a||g). \label{eq:gb-kldmin}
	\end{align}
It is also known that \cite[Theorem 2]{Runnalls07merge}
\be\label{eq:best-fit-GM-Gaussian}
  (\bm{\mu}_{s,b}, \bm{\Sigma}_{s,b})= \mathop{\arg\min}\limits_{(\bm{\mu},\bm{\Sigma})}\text{KL}\big(f_{s,b}\| \mathcal{N}(\bm{\mu},\bm{\Sigma})\big) \ist,
\ee
where $\bm{\mu}_{s,b}, \bm{\Sigma}_{s,b}$ are defined in \eqref{eq:GMmerging_m} and \eqref{eq:GMmerging_P}, respectively.

By separating the constraint from the objection function, we now decompose the minimization problem as follows
\begin{align}
     & \sum_{a,b} h_{s,a,b}\ \mathrm{KL}(\mathcal{N}_a||\mathcal{N}_{s,b})  = \nonumber \\ \label{lem3}
     & \sum_{b} \Big( \sum_{a} h_{s,a,b}\  \mathrm{KL}(\mathcal{N}_a||f_{s,b}^{*}) + \mathrm{KL}(f_{s,b}^{*}||\mathcal{N}_{s,b})\Big) \ist.
\end{align}
Hence, the minimization of the left hand in (\ref{lem3}) is equivalent to the minimization of the two items in its right hand that are properly defined as follows: $f_{s,b}^{*} \in \mathcal{F}_\mathcal{X}$ can be any function while $\mathcal{N}_{s,b} \in \mathcal{F}_\mathcal{X}$ is limited to the Gaussian function. According to \eqref{eq:gb-kldmin} and \eqref{eq:best-fit-GM-Gaussian}, $f_{s,b}^{*}$ is given by $f_{s,b}$ and $\mathcal{N}_{s,b}$ is determined by  \eqref{eq:GMmerging_m} and \eqref{eq:GMmerging_P}. This indicats that \eqref{eq:Lemm-gmfit} is solved by finding the best fit arbitrary distribution $f_{s}^{*}$ first and then the Gaussian that best fits $f_{s}^{*}$.
\end{proof}


\begin{algorithm}[t]
	\caption{variational GM-PHD-fit}\label{alg:GM-Fit}
	\begin{algorithmic}
		\setlength{\lineskip}{3pt}
		\setlength{\lineskiplimit}{3pt}
		\STATE \textbf{Input} $\left\{\bm{\mu}_{s,b},\bm{\Sigma}_{s,b}\right\}_{b=1}^{J_s}$, $\left\{\pi_{a},\bm{\mu}_{a},\bm{\Sigma}_{a}\right\}_{a=1}^{J_\text{AA}}$, $\gamma_\text{g}$ 
        \STATE \textbf{Output} $\left\{\omega_{s,b},\bm{\mu}_{s,b},\bm{\Sigma}_{s,b}\right\}_{b=1}^{J_s}$
		\STATE \hspace{0.5cm}\textbf{initialize} $i=0, K_0=0$
		\STATE \hspace{0.5cm}\textbf{repeat}
		\STATE \hspace{1.0cm}\textbf{initialize} $i=i+1$, $\left\{C_{s,b}\right\}_{b=1}^{J_{s}}=\emptyset$, $K_i=0$
		\STATE \hspace{1.0cm}\textbf{for} $a$ \textbf{=} 1 \textbf{to} $J_\text{AA}$ \textbf{do in parallel}
		\STATE \hspace{1.5cm}$n=\operatorname*{arg\,min}_{b} \mathrm{KL}(\mathcal{N}_a||\mathcal{N}_{s,b}), b=1,...,J_s$
		\STATE \hspace{1.5cm}$h_{s,a,n}=\pi_a$
		\STATE \hspace{1.5cm}$C_{s,n}=C_{s,n} \cup \left\{h_{s,a,n},\bm{\mu}_{a},\bm{\Sigma}_{a}\right\}$
		\STATE \hspace{1.5cm}$K_i=K_{i}+\mathrm{KL}(\mathcal{N}_a||\mathcal{N}_{s,n})$
		\STATE \hspace{1.0cm}\textbf{end for}
		\STATE \hspace{1.0cm}\textbf{for} $b$ \textbf{=} 1 \textbf{to} $J_{i}$ \textbf{do}
		\STATE \hspace{1.5cm}update $\left\{\omega_{s,b},\bm{\mu}_{s,b},\bm{\Sigma}_{s,b}\right\}$ with $C_{s,b}$ using \eqref{eq:GMmerging_m} and
        \STATE \hspace{1.5cm}\eqref{eq:GMmerging_P}
		\STATE \hspace{1.0cm}\textbf{end for}
            \STATE \hspace{1.0cm}Calculate $\gamma_i$ using \eqref{eq:KL-rate}
		\STATE \hspace{0.5cm}\textbf{Until} $\gamma_i \leq \gamma_\text{g}$
	\end{algorithmic}
\end{algorithm}

It can be found that the solver given by Lemma 3, referred to as \textit{varitioanl GM-PHD fit},  consists of two steps: 1) grouping the GCs received from the other sensors to the nearest L-GC in the local sensor; 2) merging the GCs in each group to a single GC so that the weight, mean and covariance of the L-GCs are updated. 
These two correlated steps may be carried out in an alternating manner for multiple iterations to approach the optimal solver. 
Obviously, the iterative optimization of the variational parameters and the GM parameters necessitates the establishment of a stopping criterion. 
For this purpose, we define the goodness of the fitting at iteration $i =1, 2,..., I$ as $K_{i,s}  \triangleq \sum_{a,b} h_{i,s,a,b}\  \mathrm{KL}(\mathcal{N}_a||\mathcal{N}_{i,s,b})$ and set a threshold, denoted by $\gamma_\text{g}$, on the reducing rate $\gamma$ of this goodness as follows
\be\label{eq:KL-rate}
  \gamma_{i,s} \triangleq \frac{|K_{i,s}-K_{i-1,s}|}{K_{i-1,s}},
\ee
where $h_{i,s,a,b}$ and $\mathcal{N}_{i,s,b}$ are the obtained variational parameters and the Gaussian distribution at fitting iteration $i$.
The pseudo-code of the variational GM-PHD-fit approach is summarized in Algorithm 2. 

\subsection{Comparison and Cooperation with ISD-CDM}\label{sec:discussion}

Similar to the ISD-CDM \cite{Li23Heterogeneous} which seeks minimizing the ISD based on the CDM, both the proposed variational GC-weight-fit and GM-PHD-fit approaches do not create or remove any new GCs in the local filters. The local filters are able to preserve the specific form of MPD whether it is Poisson, MB or LMB. 
However, the ISD-CDM approach is computationally expensive due to iterative fitting and is prone to over-fitting/non-convergence owing to the use of a non-convex function gradient, 
as discussed in \cite[Sec. 4]{Li23Heterogeneous}. This issue has been avoided in our variational approaches where each parameter is optimized by an exact solver with guaranteed convergence as shown in Appendix \ref{sec:VAconvergence}. 

Moreover, our proposed heterogeneous fusion approaches can be operated flexibly to different levels. 
That is, different sensors may run either variational GC-weight-fit or variational GM-PHD-fit, and in the latter case, they can carry out the variational fit for different numbers of iterations according to the computing capacities of the sensors. 
Furthermore, both fit approaches can be carried out in parallel and even cooperate with the ISD-CDM \cite{Li23Heterogeneous}. This leads to a higher level of heterogeneous fusion where different sensors perform heterogeneous filters that cooperate with each other in a heterogeneous mode by using different fusion algorithms. 
See Fig. \ref{fig:sych} for an illustration of such a cooperation of variational GC-weight-fit, GM-PHD-fit and ISD-CDM and see Section \ref{sec:hyb_Simu} for the simulation demonstration. 
\begin{figure}[t]
	\centering
	\includegraphics[width=8.5cm]{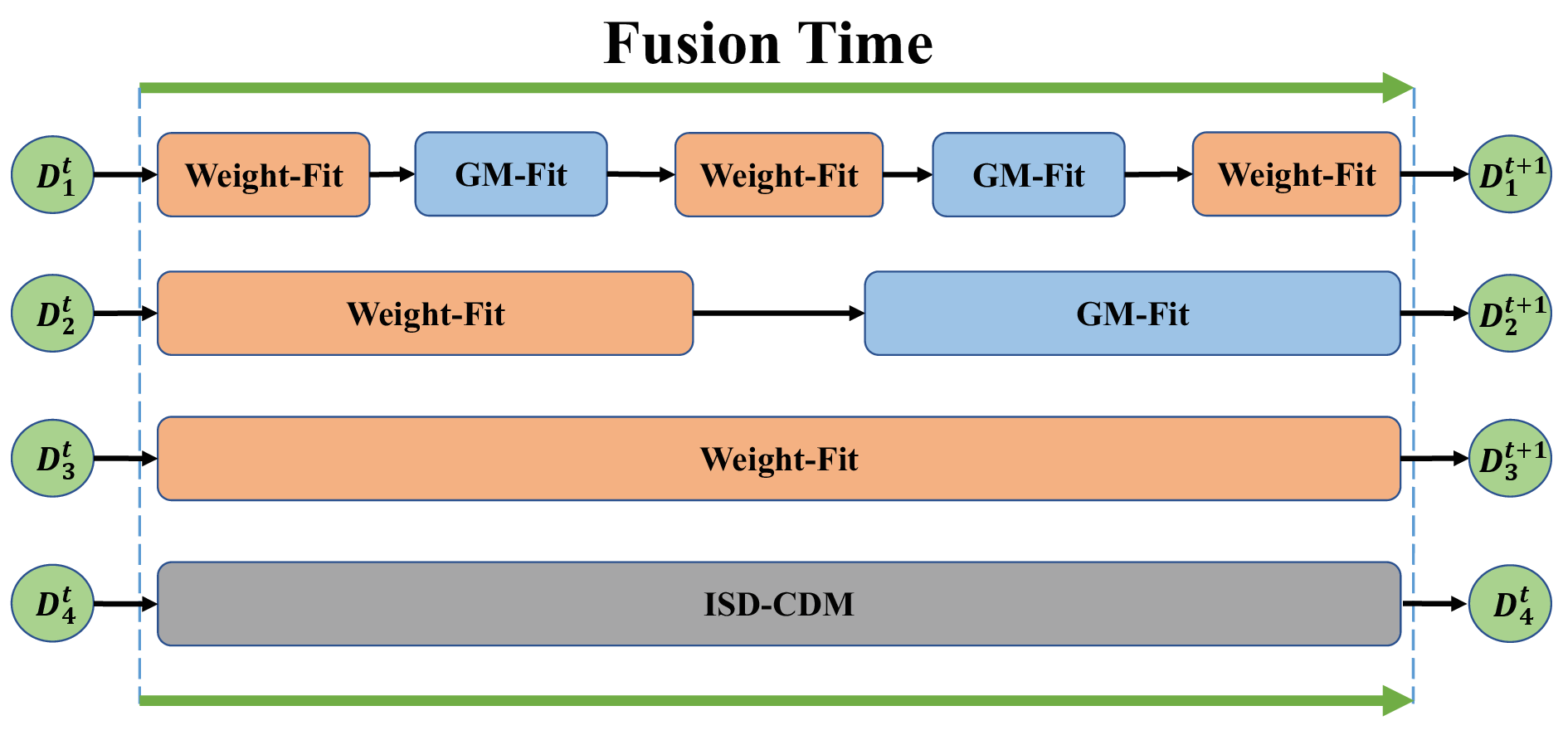}\\  
\vspace{-2mm}
	\caption{A high level of heterogeneous fusion where different nodes perform heterogeneous filters that cooperate with each other and may run different fusion algorithms in different numbers of fusion iterations. 
	} \label{fig:sych}
 \vspace{-2mm}
\end{figure}

\subsection{Distributed Implementation} \label{sec:dist_Implementation}
To accommodate the distributed sensor network, two P2P communication strategies including consensus and flooding can be employed. Both have been earlier detailed in the context of homogeneous PHD-AA fusion, e.g., \cite{Li17PC,Li17PCsmc,Li20AAmb,Li23AApmbm} and can be illustrated in Fig.~\ref{fig:consensus} and Fig.~\ref{fig:flooding}, respectively. We omit the details but highlight the following important issues:
\subsubsection{Fusion weight}
In the flooding algorithm, we assign uniform weights for aggregated PHDs from all involved sensors while in the consensus protocol, we employ the Metropolis weight \cite{xiao2006distributed} as the fusion weights, i.e., the PHD transformed from sensor $r \in {\mathcal{S}_{s}}$ to sensor $s$ will be assigned the following Metropolis weight in the fusion at sensor $s$
\be\label{metro}
	w_{r \to s}=\begin{cases}
		1/\max(|\mathcal{S}_{s}|,|\mathcal{S}_{r}|),\ &r\ne s \\
		1-\sum_{r\in \mathcal{S}_{s},r\ne s} w_{r \to s},\ &r=s
	       \end{cases}
\ee

The Metropolis weight was proposed for computing the unweighted average of all fusing items asymptotically and was proven of guaranteed convergence provided the infinitely occurring P2P communication. However, it simply does not guarantee closer result to the average in every communication and fusion step. Moreover, it suits homogeneous sensors that are of the similar quality. If the fusing items have significantly different qualities, the ultimate average should be accordingly weighted. We leave it as a part of our future work.

\subsubsection{VA errors}
In the flooding mode, the GM-PHD fit calculation is only carried out at the end of the P2P communication while it is carried out in each P2P communication iteration in the consensus protocol. In other words, the VA error is not brought in the flooding mode until at the end, but it occurs in each P2P iteration of consensus; see the difference illustrated in the comparison between Fig.~\ref{fig:consensus} and Fig.~\ref{fig:flooding}. 
Generally speaking,
the flooding algorithm that has a fast and deterministic convergence-to-consensus suffers from less VA errors in comparison with the consensus approach. 

\subsubsection{Communication cost}\label{sec:communicationcost}
We count the number of real values broadcast in the network at each iteration of fusion. Each GC is characterized by a weight, a $n_x$-dimensional mean vector and a $n_x \times n_x$ symmetric covariance matrix, taking communication of $1, n_x, \frac{n_x(n_x+1)}{2}$ real values, respectively. In the flooding communication mode, even only weights of the GCs are revised in the variational GC-weight-fit approach, all GM parameters need to be inter-node communicated as in the variational GM-PHD-fit approach; namely both fusion approaches are the same costly in communication. At each iteration of P2P communication, only the parameters of the GCs that have never been transformed between two neighbor sensors are communicated. In the consensus mode, the GC parameters (mean and covariance) remain unchanged in the variational GC-weight-fit approach and so they do not need to be repeatedly communicated between any two nodes; i.e., they are communicated in a flooding manner. But, the weights of the GCs will be updated in each iteration of the P2P fit calculation and need to be repeatedly inter-communicated as what has been done in the variational GM-PHD-fit approach where all parameters are communicated between neighbors in each P2P iteration.

\begin{figure}[t]
	\centering
	\includegraphics[width=7.8cm]{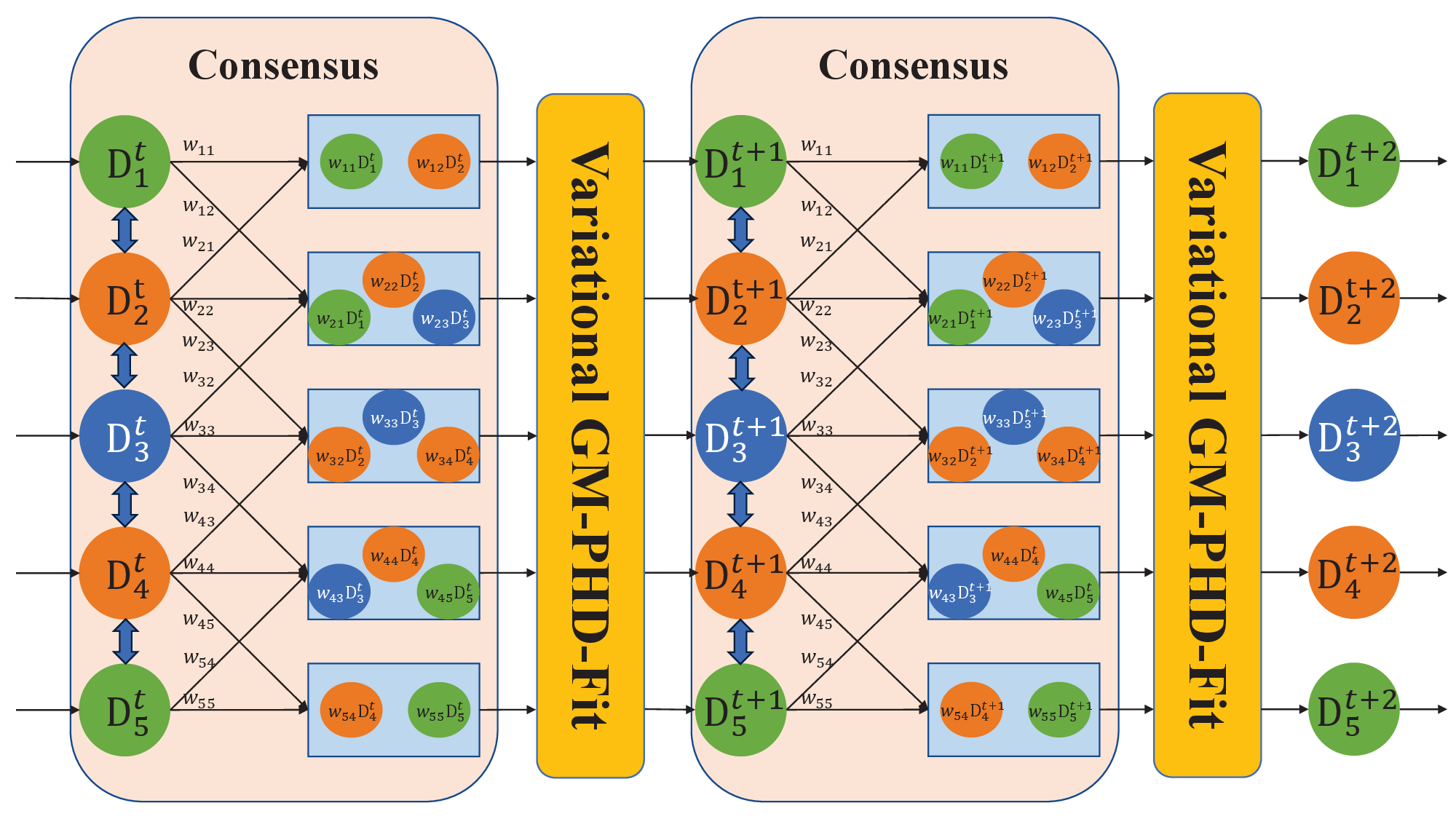}\\  
 \vspace{-1mm}
	\caption{Illustration of Consensus-based variational GM-PHD-fit. 
	} \label{fig:consensus}
	\vspace{-2mm}
\end{figure}


\begin{figure}[t]
	\centering
	\includegraphics[width=7.8cm]{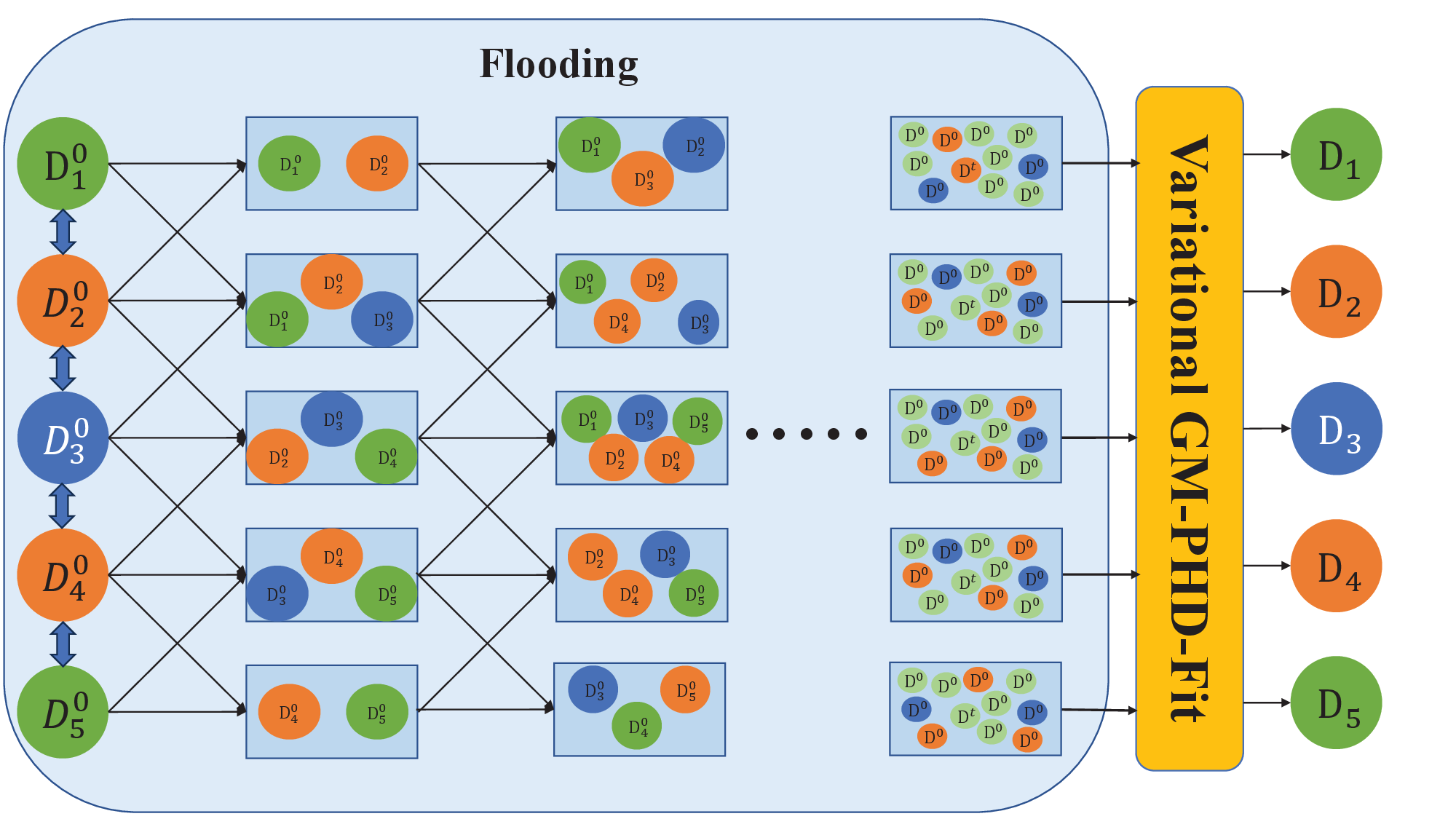}\\  
 \vspace{-1mm}
	\caption{Illustration of Flooding-based variational GM-PHD-fit. 
	} \label{fig:flooding}
	\vspace{-2mm}
\end{figure}


\section{Other Approximations} \label{sec:bounds}
There are some other approximations to the KL divergence between two GMs which may be employed as the objective function to be minimized in the proposed GM-PHD-fit framework. 
In this section, we consider five alternatives, which overall fall into two categories as follows:
\begin{itemize}
    \item The first two reduce the GM to a single GC which suits the case of single target only.
    \item The other three approximations seek different bounds on the KL divergence between GMs.
\end{itemize}
{Two} of them will be verified in comparison with the proposed VUB via simulation in Section \ref{sec:Bounds_Simu}.

\subsection{Gaussian Approximation}
A Gaussian approximation of the KL divergence between two GMs is given by \cite[Sec.4]{do2003fast}
\begin{align}
\mathbb{D}_1(D_{\mathcal{S}'_s}||D_s) \triangleq \mathrm{KL}(\hat{d}||\hat{g}),
\end{align}
where $\hat{d}$, $\hat{g}$ are the best fit Gaussian distribution with relative to $D_{\mathcal{S}'_s}$ and $D_s$, respectively; they are given by \eqref{eq:best-fit-GM-Gaussian} and the divergence is calculated as \eqref{eq:KLD-gaussian}. Another Gaussian approximation is given by using the nearest pair of Gaussian, i.e.,
\begin{align}
\mathbb{D}_2(D_{\mathcal{S}'_s}||D_s) \triangleq \min \limits_{a,b} \mathrm{KL}(\mathcal{N}_a||\mathcal{N}_{s,b}).
\end{align}
Both of them are improper for the GM-PHD-fit since GM-PHD represents the first moment of the multitarget density which can not be directly approximated by a single Gaussian.

\subsection{Convex Upper Bound}
A convex upper bound of the KL divergence between two GMs is given by \cite{goldberger2003efficient}
\begin{align}
\mathbb{D}_3(D_{\mathcal{S}'_s}||D_s) \le \sum_{a,b} \pi_a \omega_{s,b} \mathrm{KL}(\mathcal{N}_a||\mathcal{N}_{s,b}), \label{eq:Conv-upper-bound}
\end{align}
which is a naive upper bound of the KL divergence that tends to overestimate. Used in the objective of our proposed GM-PHD-fit, it tends to maximize the weight of $\mathcal{N}_{s,b}$ that has the smallest $\pi_a \omega_{s,b} \mathrm{KL}(\mathcal{N}_a||\mathcal{N}_{s,b})$ while other weights close to $0$, which is obviously unreasonable in our case.


\subsection{Variational Lower Bound}
We consider two variational lower bounds of the KL divergence of GMs. The first one is given by \cite[Sec.5]{do2003fast}
\begin{align}
\mathbb{D}_4(D_{\mathcal{S}'_s}||D_s) \ge \sum_{a} \pi_a \text{log} \frac{\sum_{a'}\pi_{a'} z_{aa'}}{\sum_{b}\omega_{s,b} z_{ab}},
\end{align}
where $z_{ab} = \int \mathcal{N}_a (\mathbf{x}) \mathcal{N}_{s,b} (\mathbf{x}) d\mathbf{x}$ is the scalar factor of the product of two Gaussians.

The above lower bound tends to underestimate. 
The second, a more accurate one with closed-form expression, is given by \cite[Sec. 7]{do2003fast}
\begin{align}
\mathbb{D}_5(D_{\mathcal{S}'_s}||D_s) \ge \sum_{a} \pi_a \text{log} \frac{\sum_{a'}\pi_{a'} e^{-\mathrm{KL}(\mathcal{N}_a||d_{a'})}}{\sum_{b}\omega_{s,b} e^{-\mathrm{KL}(\mathcal{N}_a||\mathcal{N}_{s,b})}}, \label{eq:Variational-lower-bound}
\end{align}
which does not satisfy the positivity property in general.

\section{Simulations} \label{sec:simulation}
The simulation ROI is given by $[-1\ist\text{km}, 1\ist\text{km}]\times [-1\ist\text{km}, 1\ist\text{km}]$ 
monitored by 12 sensors as depicted in Fig. \ref{fig:scenario}. The target state is denoted as $\mathbf{x}_k=[x_k \; \dot{x}_k \; y_k \; \dot{y}_k]^\text{T}\rmv$ with planar position $[x_k \; y_k]^\text{T}$ and velocity $[\dot{x}_k \; \dot{y}_k]^\text{T}$. The target birth is modeled by a GM form with four GCs (corresponding to the Poisson process for the PHD filter and the MB process for the MB and LMB filters) as follows:
\begin{align}\nonumber
    \gamma(\mathbf{x}) = \sum\limits_{\ell=1}^{4}r_{\text{B}}\mathcal{N}(\mathbf{x};\bm{\mu}_\text{B}^{(\ell)},\bm{\Sigma}_\text{B}),
\end{align}
where $r_{\text{B}} =0.03$, $\bm{\Sigma}_\text{B}^{\text{ \ \ \ }}=\mathrm{diag}([10\text{m},10\text{m/s},10\text{m},10\text{m/s}]^\text{T})^{2}$, $ \bm{\mu}_\text{B}^{(1)}=[0,0,0,0]^\text{T}$, $\bm{\mu}_\text{B}^{(2)}=[400\text{m},0,-600\text{m},0]^\text{T}$, $\bm{\mu}_\text{B}^{(3)}=[-800\text{m},0,-200\text{m},0]^\text{T}$, and $\bm{\mu}_\text{B}^{(4)}=[-200\text{m},0,800\text{m},0]^\text{T}$.

Each target has a constant survival probability $0.95$ and follows a constant velocity motion. For generating the ground truth, a noiseless transition density $f_{k|k-1}(\mathbf{x}_{k}|\mathbf{x}_{k-1})=\mathcal{N}(\mathbf{x}_{k};\mathbf{F} \mathbf{x}_{k},\mathbf{0}_{4\times4})$ is adopted, while the filters utilize $f_{k|k-1}(\mathbf{x}_{k}|\mathbf{x}_{k-1})=\mathcal{N}(\mathbf{x}_{k};\mathbf{F} \mathbf{x}_{k},\mathbf{Q})$
where
$$
\mathbf{F}=\mathbf{I}_{2}\otimes\left[\begin{array}{cc}
1 & \Delta\\
0 & 1
\end{array}\right], \,\mathbf{Q}=25 \times\mathbf{I}_{2}\otimes\left[\begin{array}{cc}
\Delta^{2}/2 & \Delta/2\\
\Delta/2 & \Delta
\end{array}\right], 
$$
with $\Delta$ and $\otimes$ denoting the sampling interval (one second in our case) and the Kronecker product operator, respectively.


The simulation is performed for 100 runs in total using the same ground truth with conditionally independent measurement series for 100 seconds each run.
For simplicity, all 12 sensors have the same time-invariant target detection probability $0.9$ and linear measurement made on the $x-y$ positions of the target state with
mutually independent zero-mean Gaussian noises with the same standard deviation of $10$m.
The clutter is uniformly distributed over the ROI and modeled as a Poisson process with rate $10$. 

For distributed implementations, various numbers of P2P iterations, denoted by $t$, over the sensor network have been considered. The simulation is carried out in either homogeneous or heterogeneous cases. In the former, all the sensors run the same PHD, MB or LMB filters while in the latter, different sensors may run different filters. In either case, the local filters may not cooperate with each other at all (namely noncooperative, indicated by $t=0$), only communicate and fuse with each other the estimated number of targets (namely CC only), or perform the proposed GM-PHD-fit via the P2P consensus or flooding scheme for various numbers of iterations $t=1,..., 4$. 
Both the ISD-CDM and the variational GM-PHD-fit approaches need to run a number of iteration steps for optimization in each fusion step.
The former uses learning rate $\alpha_i =0.2$ and fading rate $\beta_i=0.6$ for maximally $i_\text{max} =4$ fitting iterations \cite{Li23Heterogeneous} while the variational GM-PHD-fit approach use threshold $\gamma_\text{g}=0.1$ in Algorithm \ref{alg:GM-Fit}.  
%
The maximum number of L-GCs in the local GM is $200$ for the PHD filters, the maximum number of tracks/BCs is $50$ and the maximum number of L-GCs for each track/BC is $20$ in the MB/LMB filters. 
The other setup of the filters is the same as given in the codes released by Vo-Vo at https://ba-tuong.vo-au.com/codes.html.

\begin{figure}[t]
  \centering
  \includegraphics[width=8cm]{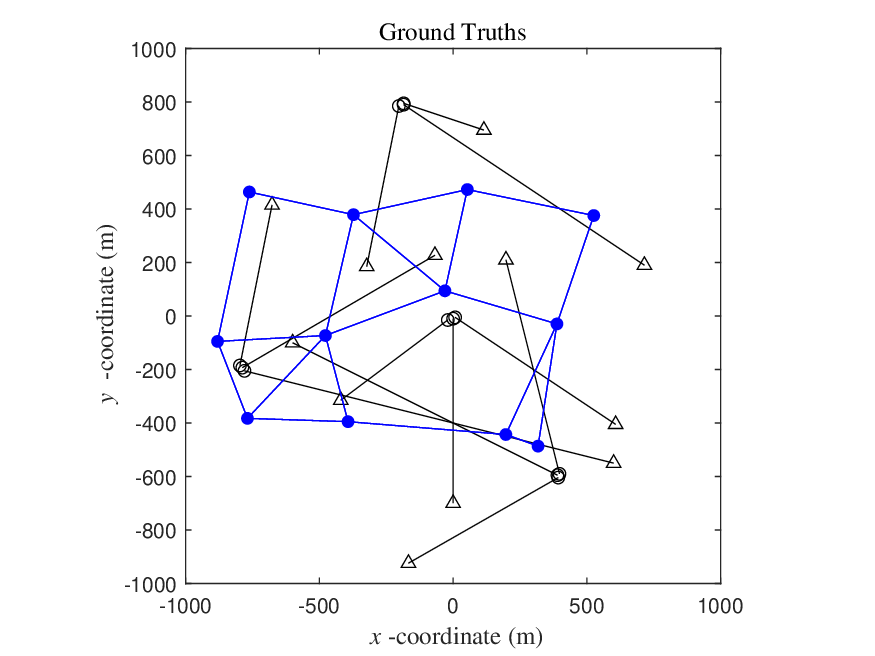}\\
  \vspace{-2mm}
  \caption{Sensor network (the blue points represent sensors and edges communication links) and target trajectories starting from $`\circ$' and ending at $`\triangle$'.} \label{fig:scenario}
  \vspace{-3mm}
\end{figure}

The filter accuracy is evaluated by the optimal subpattern assignment (OSPA) error \cite{Schuhmacher08}, 
which is given as follows, for $|{Y}| \geq |{X}|$,
\be 
\left( \frac{1}{|{Y}|}
\bigg( {\mathop {\min }\limits_{\pi  \in {{\rm \Pi} _{|{Y}|}}} \sum\limits_{i = 1}^{|{X}|} {{d^{(c)}}{{({\mathbf{x}_i},{\mathbf{y}_{\pi (i)}})}^p}} }  +  { {{c^p}}  (|{Y}| - |{X}|)} \bigg) \right)^{\frac{1}{p}} \nonumber
\ee
where $\pi$ and $ {\rm \Pi}_n $ are a permutation and the set of all permutations on $\{1,\ldots,n \}$, and $ {d^{(c)}}(\mathbf{x},\mathbf{y}) = \min \left( {d(\mathbf{x},\mathbf{y}),c} \right) $ is the Euclidean distance between $\mathbf{x}$ and $\mathbf{y}$ with threshold $c$. 
In our simulation, we use $c=100$m and $p=2$.

\subsection{Comparison of Different GM-KL Divergence Bounds} \label{sec:Bounds_Simu}
Firstly, we compare the convex upper bound \eqref{eq:Conv-upper-bound} and variational lower bound \eqref{eq:Variational-lower-bound}  with the VUB \eqref{eq_upbound} for GC-weight-fit, namely revising the weights of the L-GCs only.
Since we did not find a nice solver to the former two bounds as Lemma \ref{lemma_VPmin} for minimizing VUB, we employed the classic CDM method for approximating them.
All the sensors in the network run the PHD filter and communicate with each other in the flooding manner. Fig. \ref{fig:Bound} shows the average OSPA errors with different numbers of flooding iterations. Evidently, the proposed variational GC-weight-fit using VUB performs much better than the other bounds. In the following, we will focus merely on the VUB.

\begin{figure}
  \centering
  \includegraphics[width=7cm]{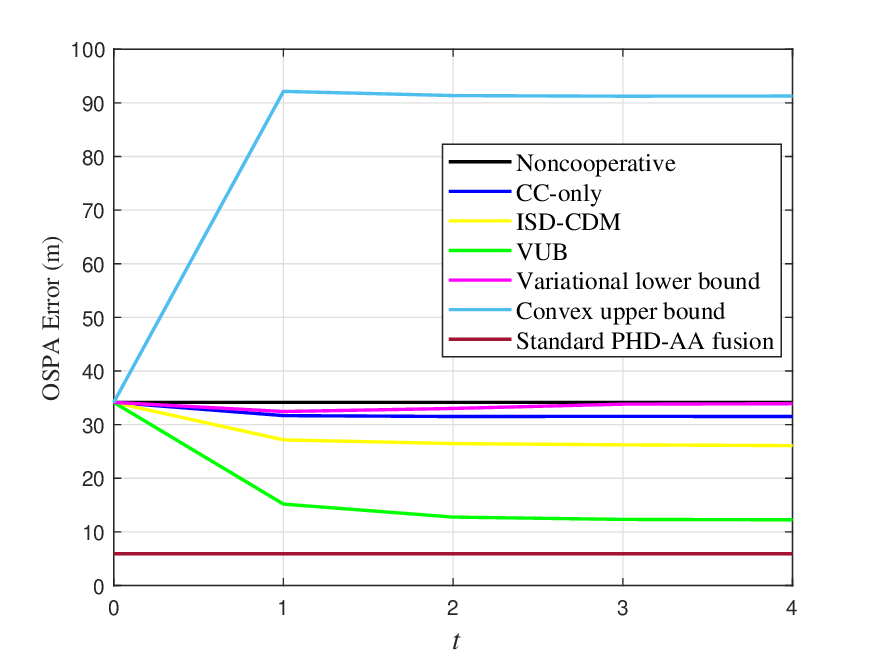}\\
  \vspace{-2mm}
  \caption{The performance of the GC-weight-fit in the case of homogeneous fusion of PHD filters via approaching the convex upper bound \eqref{eq:Conv-upper-bound}, variational lower bound \eqref{eq:Variational-lower-bound}  and the VUB \eqref{eq_upbound} for GC-weight-fit, respectively, in comparison with the ISD-CDM and CC only approaches.} \label{fig:Bound}
  \vspace{-3mm}
\end{figure}

\begin{figure*}[!t]
  \centering
  \includegraphics[width=15cm]{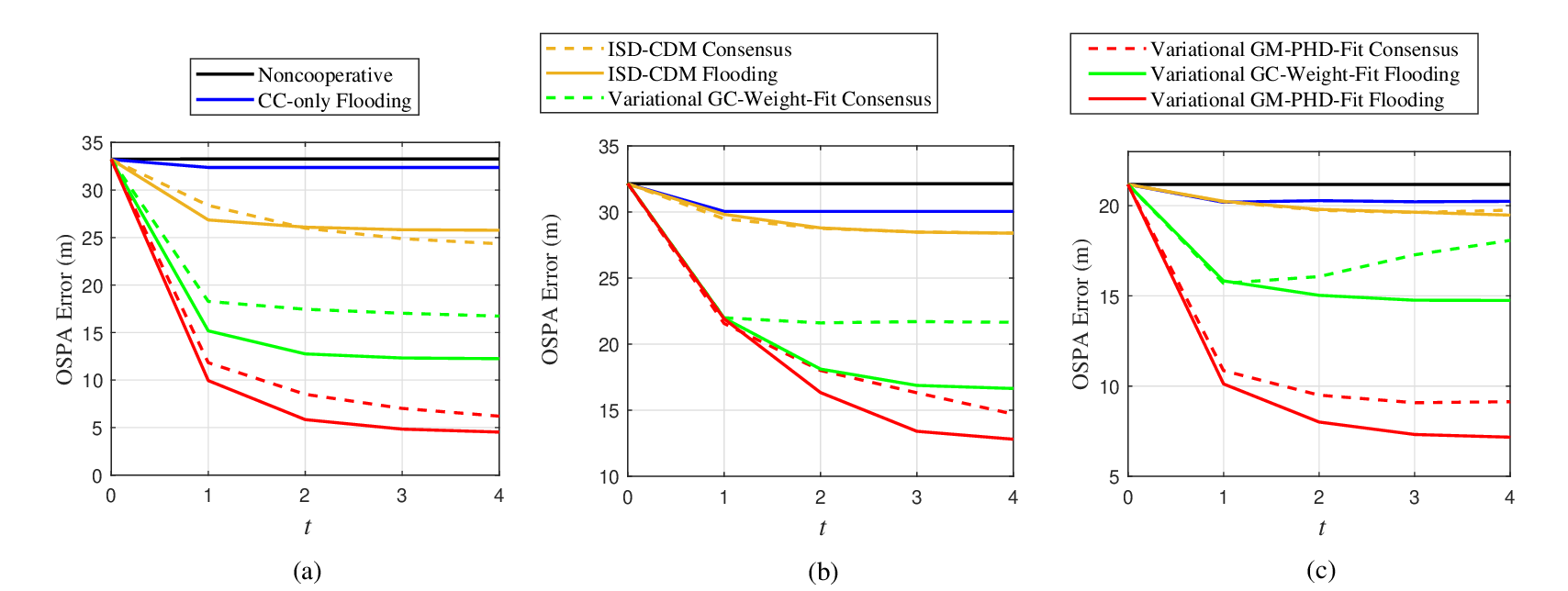}\\
  \vspace{-2mm}
  \caption{Average OSPA error of the homogeneous fusion of PHD (a), MB (b) and LMB (c) filters, respectively.} \label{fig:homo-Aver-opsa}
  \vspace{-2mm}
\end{figure*}

\begin{figure}
  \centering
  \includegraphics[width=6.5cm]{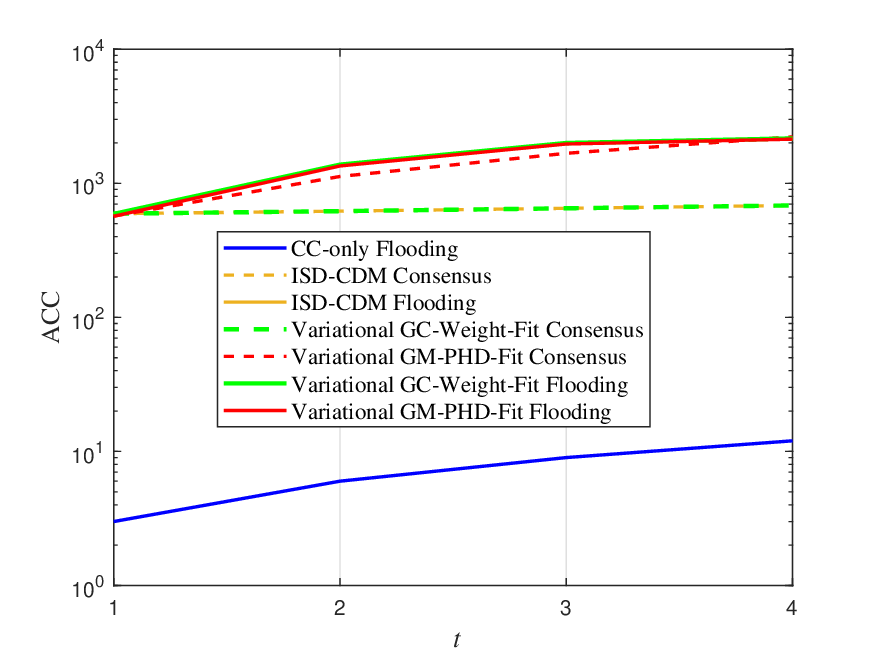}\\
  \vspace{-2mm}
  \caption{ACC of the PHD filters using different fusion approaches.} \label{fig:PHDcommunicationcost}
  \vspace{-2mm}
\end{figure}

\subsection{Homogeneous PHD/MB/LMB Fusion} \label{sec:homogeneousSim}
In this subsection, we test the performance of the proposed variational GM-PHD-fit approach in the homogeneous case of the PHD filter fusion. 
As shown in Fig. \ref{fig:homo-Aver-opsa} (a), the average OSPA errors of the local PHD filters over all 100 runs and all 100 filtering steps have been significantly reduced in both GC-weight-fit and GM-PHD-fit approaches, especially in the latter, using whether flooding or consensus. 
The average-over-time OSPA errors of the homogeneous fusion approaches for MB and LMB filters are given in Fig. \ref{fig:homo-Aver-opsa} (b) and (c), respectively. All fusion methods demonstrate a gradually decreased OSPA error with the increase of the number of P2P communication iterations, namely network consensus convergence, except for the GC-weight-fit consensus approach in the case of LMB filters. We conjugate that this is mainly because of the the approximate error of the consensus algorithm using the Metropolis weights, which together with the non-negligible VA error for LMB filter fusion, accumulates with the increase of consensus iterations. As a result, the OSPA error is significantly reduced at the first fusion iteration but slightly increased with the increase of the number of fusion iterations when $t\geq 2$. This, however, is mitigated in the case of variational GM-PHD-fit of which the fusion gain is more significant than the consensus divergence error. In summary,
\begin{itemize}
    \item The proposed variational GM-PHD-fit
 performs the best, the variational GC-weight-fit
 the second, the ISD-CDM the third and the CC-only the forth.
 \item The flooding approach converges faster, achieves lower OSPA error in comparison with than the consensus approach for each $t =1,...,4$.
\end{itemize}
More specifically, the variational fusion approaches perform much better than ISD-CDM methods and can reduce the OSPA error as much as 85\% in the case of variational GM-PHD-fit flooding. The reduction of the OSPA error can be as significantly as 61\% and 65\% in the MB and LMB cases, respectively. In contrast, the reduction in the cases of CC only, ISD-CDM is no more than 10\% and 25\%, respectively. As found within the ISD-CDM approach \cite{Li23Heterogeneous}, better fusion of the MBs/LMBs should be given in a proper Bernoulli-to-Bernoulli manner namely labeled PHD consensus.

Fig. \ref{fig:PHDcommunicationcost} presents the average communication cost (ACC) of local PHD filters, namely the number of real values broadcast by a sensor to its neighbors during all the dissemination/fusion iterations performed at one time step, averaged over all the sensors, time steps, and simulation runs. As shown, all flooding algorithms have the similar ACC except for the CC-only approach that broadcasts merely the estimated number of targets among sensors. Their slight difference is due to the different local GM sizes obtained after the PHD-AA fusion.
However, the ISD-CDM consensus and the proposed variational GC-weight-fit consensus have lower ACC than the others when $t\geq 2$ because only weights of GCs are changed and need to be broadcast.

\begin{figure}
  \centering
\includegraphics[width=0.9\columnwidth,draft=false]{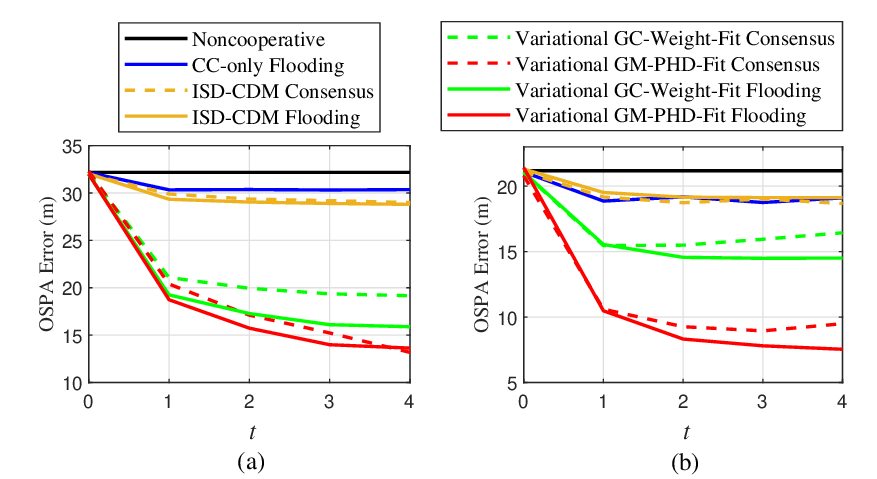}\\
  \vspace{-2mm}
  \caption{Average OSPA error of the heterogeneous fusion of MB and LMB filters. (a) result of the MB filters, (b) result of the LMB filters.} \label{fig:MBLMB}
    \vspace{-2mm}
\end{figure}

\begin{figure}
	\centering	\includegraphics[width=1.0\columnwidth,draft=false]{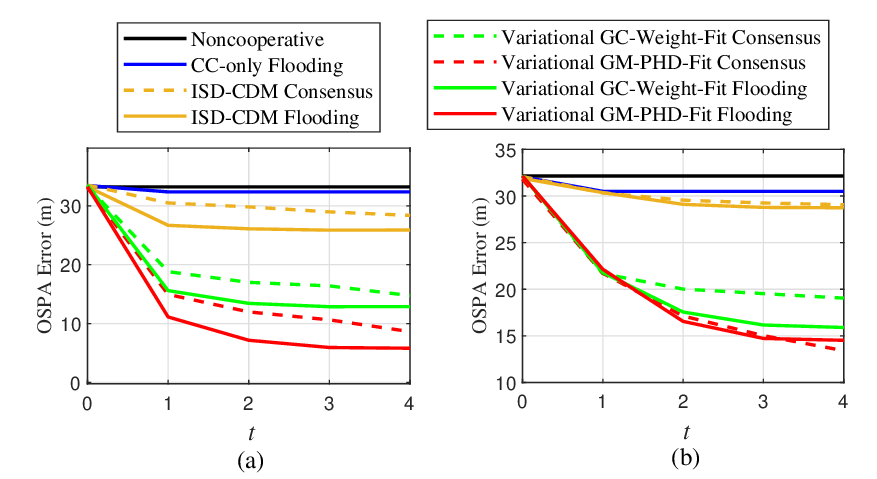}\\
 \vspace{-2mm}
	\caption{Average OSPA error of the heterogeneous fusion of PHD and MB filters. (a) result of the PHD filters, (b) result of the MB filters.} \label{fig:PHDMB}
	\vspace{-2mm}
\end{figure}

\begin{figure}
	\centering	\includegraphics[width=1.0\columnwidth,draft=false]{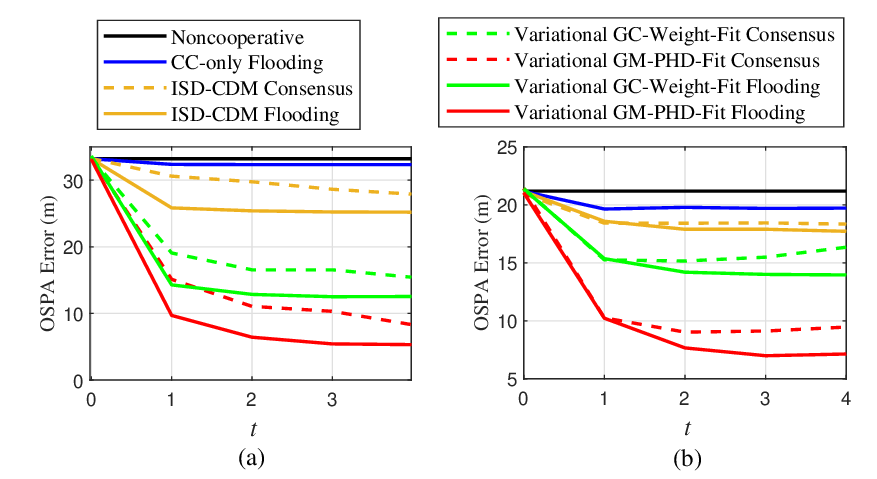}\\
 \vspace{-2mm}
	\caption{Average OSPA error of the heterogeneous fusion of PHD and LMB filters. (a) result of the PHD filters, (b) result of the LMB filters.} \label{fig:PHDLMB}
	\vspace{-2mm}
\end{figure}

\subsection{Heterogeneous PHD, MB and LMB Fusion}
We consider the heterogeneous fusion of two forms of filters. The average-over-time OSPA errors of different fusion approaches in the case of MB(6)-LMB(6), PHD(6)-MB(6) and PHD(6)-LMB(6) filter cooperation are given in Figs. \ref{fig:MBLMB}-\ref{fig:PHDLMB}, respectively, where the number in the bracket behind each filter indicates the number of sensors running that type of filter in the simulation. The results are averaged for each type of filters, separately.
Similar to the case of homogeneous fusion, the proposed variational GM-PHD-fit approach can benefit the PHD filters (maximally 85\%) more than for the MB/LMB filters (maximally 58\% and 69\%, respectively). This is because the PHD-AA fusion seeks the first moment consensus, which is simply insufficient for the the multi-object posterior (namely MB/LMB) consensus in the MB/LMB filters while what is calculated in the PHD filter is just the PHD over time. Similar with the homogeneous fusion, the fusion result becomes worse with the increase of $t$ after $t\geq 2$ in the case of the LMB filters, whether it cooperates with the PHD or MB filters.



\begin{figure*}
	\centering
	\includegraphics[width=15cm]{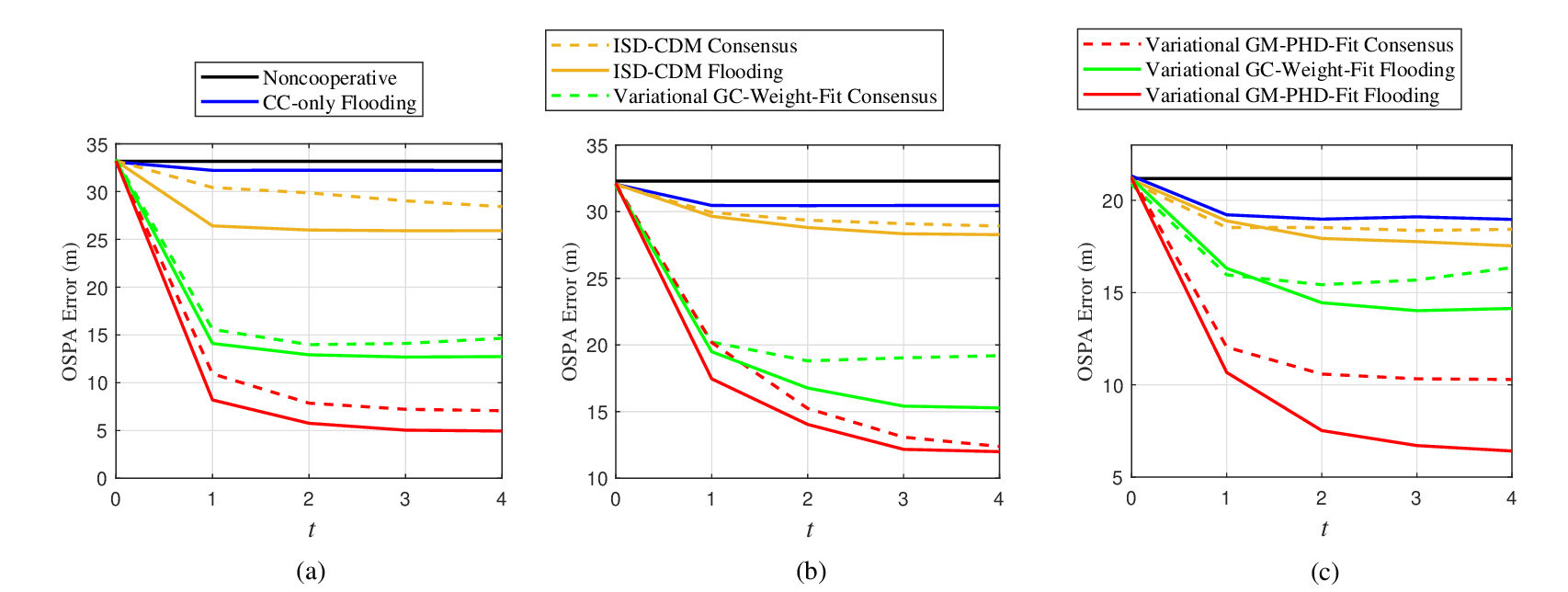}\\
 \vspace{-2mm}
	\caption{Average OSPA error of the heterogeneous fusion of PHD (a), MB (b) and LMB (c) filters.} \label{fig:PHDMBLMB}
	\vspace{-2mm}
\end{figure*}

Further on, we consider the heterogeneous fusion of the PHD(4)-MB(4)-LMB(4) filters. The average OSPA error of each type of filter is given in Fig. \ref{fig:PHDMBLMB}, which confirms that the heterogeneous unlabeled PHD-AA fusion can benefit all filters, where the reduction of OSPA error due to fusion can be as much as 85\%, 62\% and 72\%, for the PHD, MB and LMB filters, respectively. Fig. \ref{fig:hetercommunicationcost} shows the ACC in this scenario which complies with the case of homogeneous PHD filter fusion as shown in Fig. \ref{fig:PHDcommunicationcost}.

\begin{figure}
  \centering
  \includegraphics[width=7cm]{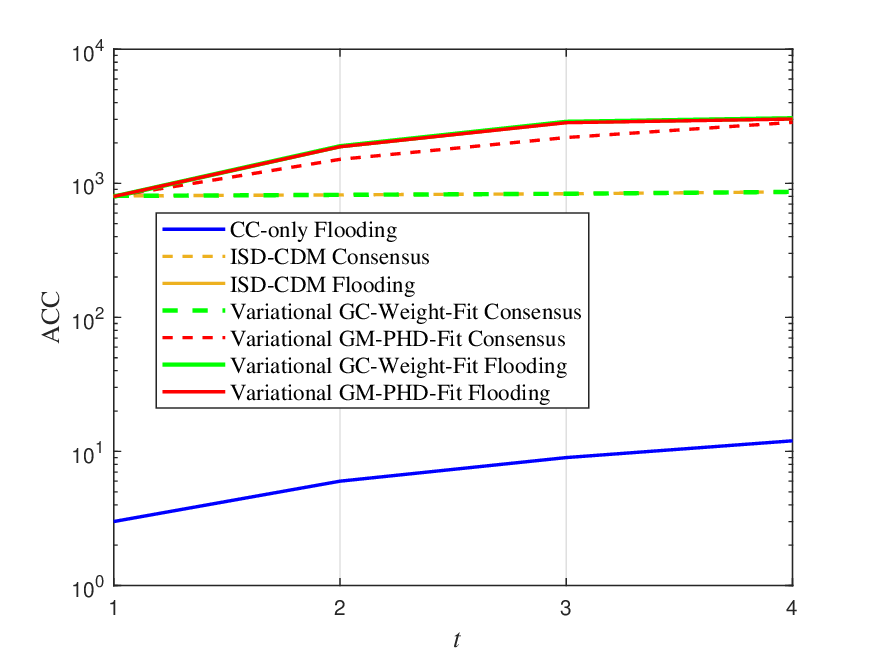}\\
  \vspace{-2mm}
  \caption{ACC of the heterogeneous fusion of PHD, MB and LMB filters. 
  } \label{fig:hetercommunicationcost}
  \vspace{-2mm}
\end{figure}

\subsection{Hybrid GM-PHD-fit} \label{sec:hyb_Simu}
Finally, we consider a higher level of heterogeneous fusion where the sensors run different types of RFS filters and different fusion approaches. This is best adapted for the case different sensors have different computing and communication capacities. 
To be specific, 4 nodes run the PHD filter and the ISD-CDM fusion approach to cooperate with the others, 4 nodes run the LMB filter and the variational GM-PHD-fit fusion approach and the rest 4 nodes run the MB filter and the variational GC-weight-fit fusion approach. The resulted average OSPA errors for each type of filters are shown in Fig. \ref{fig:Hybrid} (a) and (b) for the case of using consensus and flooding for inter-node communication, respectively. The results show that all three types of filters have been significantly improved in accuracy by their respective heterogeneous fusion approaches, which confirms the effectiveness of the proposed PHD-AA fusion approaches for hybrid, heterogeneous fusion. However, more significant divergence occurs in both the LMB (and even MB) filters with the increase of the number of fusion iterations in the consensus mode. The reasons can be twofold. First, the consensus algorithm using Metropolis weights as aforementioned does not guarantee better fusion results in with the increase of the number $t$ of consensus steps. Second, the ISD-CDM approach (using learning rate $\alpha_i =0.2$ and fading rate $\beta_i=0.6$ in this case) does not guarantee convergence and may diverge with the increase of $t$ as shown in \cite{Li23Heterogeneous}. 

\begin{figure}
	\centering
	\includegraphics[width=1.0\columnwidth,draft=false]{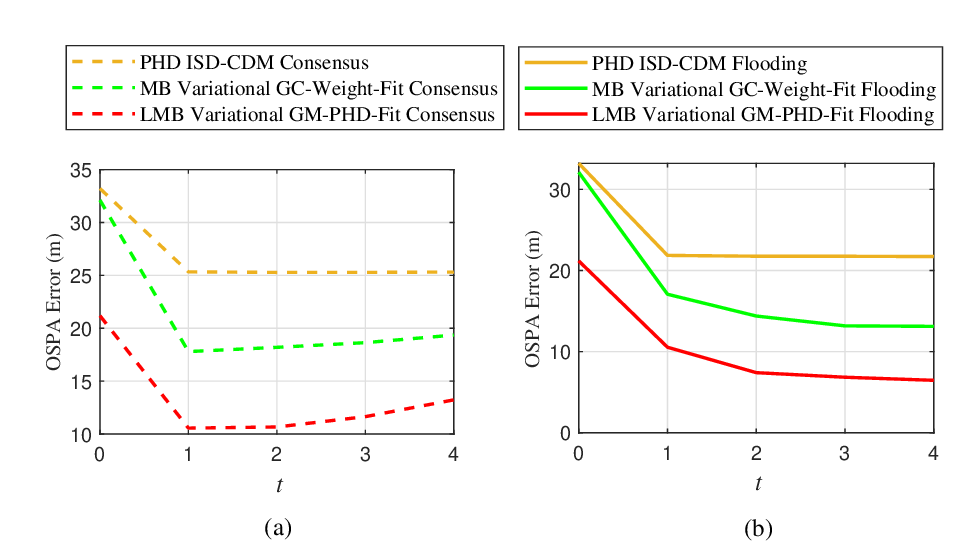}\\
 \vspace{-2mm}
	\caption{Average OSPA error of the hybrid heterogeneous fusion of PHD, MB and LMB filters in the case of (a) consensus  and (b) flooding, respectively. 
 } \label{fig:Hybrid}
	\vspace{-2mm}
\end{figure}

\section{Conclusion} \label{sec:conclusion}
We investigate the variational approximation approach to heterogeneous fusion of PHD, MB and labeled MB filters which averages their unlabeled PHDs based on the GM implementation while preserving their individual posterior forms. The approach has been implemented in two versions: one merely adjusts the weights of the local Gaussian components and is efficient in communication and computation. The other revises both the weights and the local Gaussian mean and covariance parameters in an alternating manner and yields better results while being more costly in both communication and fusion calculation. Both versions can be implemented using the consensus or flooding algorithm for inter-filter communication.
Simulations have demonstrated the effectiveness of our proposed approaches for both homogeneous and heterogeneous PHD-MB-LMB filter fusion. 
Improvement can be expected by performing proper track-to-track association prior to MB and LMB fusion. 

\appendix
\subsection{Proof of \eqref{eq:RFS-AA-Whole-KLD}} \label{sec:app-PHD-BFoM}
We define the following probability densities by normalizing the PHDs
    \begin{align}
    d^{\text{AA}}_{\mathcal{S}'_s}(\mathbf{x})  & \triangleq \frac{D^{\text{AA}}_{\mathcal{S}'_s}(\mathbf{x}) }{ \hat{N}^{\text{AA}}_{\mathcal{S}'_s}}, \label{eq:Def_d_AA} \\
    d_r(\mathbf{x})  & \triangleq \frac{D_r(\mathbf{x})}{ \hat{N}_r}, r\in \mathcal{S}'_s, \label{eq:Def_di} \\
    g'(\mathbf{x})  & \triangleq \frac{g(\mathbf{x})}{\hat{N}^{\text{AA}}_{\mathcal{S}'_s}}, \label{eq:Def_g'}
    \end{align}
where $\hat{N}^{\text{AA}}_{\mathcal{S}'_s}= \int_{\mathcal{X}} D^{\text{AA}}_{\mathcal{S}'_s}(\mathbf{x})d \mathbf{x}, \hat{N}_r = \int_{\mathcal{X}} D_r(\mathbf{x})d \mathbf{x}, r\in \mathcal{S}'_s$ and \eqref{eq:CC-constraint-BFoM} was used in \eqref{eq:Def_g'}.

As proven in \cite{Kulhavy96,Abbas09} and analyzed in \cite{Li24SomeResults}, one has
    \be
            d^{\text{AA}}_{\mathcal{S}'_s}(\mathbf{x})  = {\underset{g\in \mathcal{F}_{\mathcal{X}} }{\arg\min}} \sum\limits_{r \in {\mathcal{S}'_s}} w_r {\text{KL}}(d_r ||g ). \label{eq:d_AA_KLD_Min}
    \ee
We now expand the cost to be minimized in \eqref{eq:RFS-AA-Whole-KLD} as follows
\begin{align}
J(g) \triangleq & \sum\limits_{r \in {\mathcal{S}'_s}} w_r {\text{KL}}(D_r ||g ) \\
=& \sum\limits_{r \in {\mathcal{S}'_s}} w_r \int_{\mathcal{X}} \hat{N}_r d_r(\mathbf{x})\log \frac{ \hat{N}_r d_r(\mathbf{x})}{\hat{N}^{\text{AA}}_{\mathcal{S}'_s}g'(\mathbf{x})}d \mathbf{x} \label{eq:PHD-pdf-KLDcost} \\
=& \sum\limits_{r \in {\mathcal{S}'_s}} w_r \hat{N}_r \int_{\mathcal{X}} d_r(\mathbf{x})\log \frac{ \hat{N}_r }{\hat{N}^{\text{AA}}_{\mathcal{S}'_s}} d \mathbf{x} \nonumber \\
& + \sum\limits_{r \in {\mathcal{S}'_s}} w_r \hat{N}_r \int_{\mathcal{X}} d_r(\mathbf{x})\log \frac{d_r(\mathbf{x})}{g'(\mathbf{x})}d \mathbf{x} \nonumber \\
=& \sum\limits_{r \in {\mathcal{S}'_s}} w_r \hat{N}_r \log \frac{ \hat{N}_r }{\hat{N}^{\text{AA}}_{\mathcal{S}'_s}}  + \sum\limits_{r \in {\mathcal{S}'_s}} w_r {\text{KL}}(d_r ||g' ) \label{eq:J(p)-expansion}
\end{align}
where \eqref{eq:Def_di} and \eqref{eq:Def_g'} were used in \eqref{eq:PHD-pdf-KLDcost}.

It is evident that $\sum\limits_{r \in {\mathcal{S}'_s}} w_r \hat{N}_r  \log \frac{ \hat{N}_r }{\hat{N}^{\text{AA}}_{\mathcal{S}'_s}} $ is constant and the rest part of \eqref{eq:J(p)-expansion}, as well as $J(g)$, is minimized by $g'(\mathbf{x}) = d_{\text{AA}}(\mathbf{x})$ according to \eqref{eq:d_AA_KLD_Min}. This 
proves \eqref{eq:RFS-AA-Whole-KLD}.

\subsection{Proof of Lemma \ref{varlemma}} \label{sec:VAconvergence}
We start from problem 1 and denote $F_{1}(\Theta) \triangleq \mathrm{KL}(\phi||\varphi) $ and $F_{2}(\Theta,Q) \triangleq \sum_{a,b} \phi_{s,b|a}\  \mathrm{KL}(\mathcal{N}_a||\mathcal{N}_{s,b})$, and further,
\begin{align}
	F(\Theta,Q) & \triangleq F_{1}(\Theta) + F_{2}(\Theta,Q) \nonumber \\
	&=\mathrm{KL}(\phi||\varphi) + \sum_{s,a,b} \phi_{b|a}\  \mathrm{KL}(\mathcal{N}_a||\mathcal{N}_{s,b}), 
\end{align}
where $\Theta$ denotes the set of two variational parameters $\phi$ and $\varphi$, $Q$ denotes the parameter set of $\mathcal{N}_{s,b}$.

Note that both $F_{1}(\Theta)$ and $F_{2}(\Theta,Q)$ are the linear combinations of KLD, thus are non-negative. The goal of our proposed variational GM-PHD-fit approach is to minimize $F(\Theta,Q)$ by adjusting $\phi_{s,b|a}$, $\varphi_{s,a|b}$ and $\{\bm{\mu}_b,\bm{\Sigma}_b\}$ in an alternating manner. Starting from any feasible point $F(\Theta^0,Q^0)$, we firstly update the variational parameters to minimize the objective function, which results in
\begin{align}
	F(\Theta^1,Q^0)\le F(\Theta^0,Q^0), 
\end{align}
where $\Theta^1$ is the resulted new variational parameters. Then, we optimize the distribution parameters based on $\Theta^1$, yielding
\begin{align}
	F(\Theta^1,Q^1)\le F(\Theta^1,Q^0) \ist. 
\end{align}

Based on $F(\Theta^1,Q^1)$, the above alternating optimization of the variational parameters and the L-GC parameters will be iterated till get the desired results. By this, the objective function will be monotonically decreasing, namely, optimization convergence.

\bibliographystyle{IEEEtran}
\bibliography{VBHeterogenousFusion}

\end{document}